\newcommand{\lb} {\left}
\newcommand{\rb} {\right}
\newcommand{\nn} {\nonumber}
\newtheorem{corollary}{Corollary}
\newtheorem{proposition}{Proposition}
\newtheorem{remark}{Remark}
\begin{document}
 \onecolumn{\noindent © 2023 IEEE. Personal use of this material is permitted. Permission from IEEE must be obtained for all other uses, in any current or future media, including reprinting/republishing this material for advertising or promotional purposes, creating new collective works, for resale or redistribution to servers or lists, or reuse of any copyrighted component of this work in other works.}
 
\twocolumn{
\title{Destination Scheduling for Secure Pinhole-Based Power-Line Communication}

\author{ Chinmoy Kundu,~\IEEEmembership{Member,~IEEE}, Ankit Dubey,~\IEEEmembership{Member,~IEEE},  Andrea M. Tonello~\IEEEmembership{Senior Member, IEEE}, Arumugam Nallanathan~\IEEEmembership{Fellow, IEEE}, And  Mark F. Flanagan~\IEEEmembership{Senior Member, IEEE}

\thanks{Chinmoy Kundu is with School of Electrical and Electronic Engineering, University College Dublin, Belfield, Ireland}
\thanks{Ankit Dubey is with Department of EE, Indian Institute of Technology Jammu, Jammu \& Kashmir, India }
\thanks{Andrea M. Tonello  is with Institute of Networked and Embedded Systems, University of Klagenfurt, Austria}
\thanks{Arumugam Nallanathan is with School of Electronic Engineering and Computer Science, Queen Mary University of London, U.K.}

\thanks{This publication has emanated from research supported in part by Science Foundation Ireland (SFI) under Grant Number 17/US/3445 and 22/IRDIFA/10425 and by the Department of Science and Technology (DST), India sponsored project TMD/CERI/BEE/2016/059.}
} 

\maketitle
\thispagestyle{empty}
\pagestyle{empty}
\pagestyle{plain} 

\begin{abstract}
We propose an optimal destination scheduling scheme to improve the physical layer security (PLS) of a power-line communication (PLC) based Internet-of-Things system in the presence of an eavesdropper. We consider a pinhole (PH) architecture for a multi-node PLC network to capture the keyhole effect in PLC. The transmitter-to-PH link is shared between the destinations and an eavesdropper which correlates all end-to-end links. The individual channel gains are assumed to follow independent log-normal statistics. Furthermore, the additive impulsive noise at each node is modeled by an independent Bernoulli-Gaussian process. Exact computable expressions for the average secrecy capacity (ASC) and the probability of intercept (POI) performance over many different networks are derived. Approximate closed-form expressions for the asymptotic ASC and POI are also provided. We find that the asymptotic ASC saturates to a constant level as transmit power increases. We observe that the PH has an adverse effect on the ASC. Although the shared link affects the ASC, it has no effect on the POI. We show that by artificially controlling the impulsive to background noise power ratio and its arrival rate at the receivers, the secrecy performance can be improved. 
\end{abstract}

\begin{IEEEkeywords}
Bernoulli-Gaussian impulsive noise, destination scheduling, log-normal distribution, physical layer security, power-line communication.
\end{IEEEkeywords}

\maketitle

\section{INTRODUCTION}
\label{introduction}

\IEEEPARstart{P}{ower-line} communication (PLC) can be an excellent candidate for many upcoming industrial applications, e.g., home automation, energy monitoring systems, smart grid, and more recently the internet of things (IoT) as it can exploit the existing power lines to transfer high-speed data content 
\cite{PLC_BOOK:10,   LiMaMi:19, QiShLiZhShWa:20}. 
In a smart grid environment, devices like smart meters (SMs) communicate with the data concentrator unit (DCU) to send their data to the head-end system (HES) for various applications as shown, for example, in Fig. \ref{fig1a} \cite{PiTh:17, PiBhTh:17}. Communication links between a SM and DCU can be established using PLC. 
DCUs can connect to the HES 
through standard communication technologies \cite{PiTh:17, PiBhTh:17} for further storage or processing the data. As we focus on the performance of the PLC network in this paper, communication between DCU and HES is omitted from further discussions.

Like any other communication system design, channel modeling is an important consideration in PLC systems as well. Towards modeling PLC channels, more specifically channel transfer functions, initially, two categories are developed, i.e.,  phenomenological modeling \cite{Zimmermann_Dostert_A_multipath_model_for_the_powerline} and deterministic modeling \cite{Galli_Banwell_A_deterministic_frequency_domain_model}. The former model is based on multipath effects due to impedance mismatch. In this model, the channel parameters are fitted after measuring the channel. Hence,  can not be used to obtain the transfer function a priory. The latter one is based on transmission line (TL) theory. In both these cases, knowledge of the whole topology or site-specific information is required for the computation of the channel transfer function which is difficult to obtain.  To alleviate this problem, a physically meaningful simple statistical characterization of the PLC channel is proposed across different topologies in \cite{Galli_A_simplified_model, Ga:11}. The conclusion of the model is that multipath in PLC channels gives rise to lognormally distributed average channel gain statistics similar to shadow fading in wireless channels.  The concept of ``distance'' in shadow fading is effectively replaced by the concept of ``link topology'' in PLC channel. The authors in \cite{Versolatto_Tonello_PLC_channel_characterization, Tonello_Pittolo_In_Home_Power_Line_Communication} also verify the average channel gain statistics of in-home power line communication channel as log-normal in the wider frequency band than adopted in \cite{Galli_A_simplified_model, Ga:11}. The statistics of the first arriving path, which is generally the dominant path, is also shown to be log-normally distributed in \cite{GuCeAr:11}.

The PLC channel is essentially a frequency selective channel. 
However, the authors in \cite{TlZeMoGa:08, AnToLeQu:11} have shown that PLC networks with proper branching (uniform conductors) exhibit large coherence bandwidth. Hence for narrowband signals, the channel is frequency flat. Moreover, for broadband signals, multi-carrier modulation (MCM) schemes are employed, e.g., orthogonal frequency division multiplexing (OFDM) and filterbank multi-carrier (FBMC), and hence the signals in each sub-channels can be assumed to be undergoing frequency flat channel \cite{PiTo:14}. 
Therefore, it is sufficient to perform analysis under the assumption of frequency flat channel at the sub-channel level.

The PLC links also suffer from impulsive noise, the effect of which is captured by modeling it as an additive Bernoulli-Gaussian noise (ABGN) process \cite{PLC_BOOK:10, Dubey_PLC_incremental,  MaSoGu:05}. In an ABGN model, background noise is omnipresent and modelled as white Gaussian noise process like in RF systems; however, impulsive noise (also modelled as additive white Gaussian noise) occurs occasionally with a certain probability (controlled by a Bernoulli process).

In a PLC network, a part of the wires is shared among various communication links (as can be seen from Fig. \ref{fig1a}) and form a tree topology. The node up to which the part of the wire is shared is called a pinhole (PH) \cite{PiTo:14}. The PH introduces a keyhole effect between branches. Due to the PH, various links in PLC become correlated. The keyhole effect has been studied previously in wireless communication scenarios \cite{Chizhik_Capacities_of_multi_element,Almers_Keyhole_effect, Chizhik_Keyholes_correlations}. An example of a keyhole in a realistic wireless environment is propagation in a hallway or a tunnel. The keyhole effect assumes that the equivalent channel between any two nodes connected via a pinhole is the product of the channel gains between those individual nodes and the pinhole.  The effect of a pinhole in a wireless multiple-input multiple-output (MIMO) channel is reduced channel rank and hence lower average capacity. The PH effect was studied in cooperative multi-hop PLC in \cite{Lampe_Cooperative_multihop_power}; it was shown that due to the keyhole effect, relaying or multi-hop communications does not increase the diversity gain.



It is highly likely that an eavesdropper, pretending to be a DCU, can desire to access the information from SMs by taking advantage of the broadcast nature of the PLC channel \cite{ CaPoRi:21, Ki:11,  PiTo:14, SaHaAl:17,  SaMaChAl:18, CaPoRi:19, CaPoRi:20, MoMaAiBh:19, Ah:21}. To prevent sensitive information from being compromised, traditionally, security has been considered in the higher layers of the communication protocol stack 
\cite{Ki:11}. However, in addition to the traditional higher layer security mechanisms, the security of a communication system can be further enhanced by using physical layer security (PLS) \cite{CaPoRi:21, PiTo:14, SaHaAl:17, SaMaChAl:18, CaPoRi:19, CaPoRi:20,MoMaAiBh:19,Ah:21}. PLS exploits the physical channel characteristics to achieve security against eavesdropping. 

PLS for a PLC system was introduced in \cite{PiTo:14}, wherein the authors studied the secrecy rate distribution and secrecy rate region for multi-carrier and multi-user broadcast channel; here, the authors considered both simulated and experimental channel realizations. The effect of a PH on the secrecy rate was shown. 
The average secrecy capacity (ASC) and secrecy outage probability (SOP) were evaluated in \cite{SaHaAl:17} for a cooperative PLC network in the presence of an eavesdropper when all links were correlated. The authors investigated the effect of artificial noise power, relay gain, and channel correlation on secrecy performance. The effect of destination and eavesdropper channel correlation on the PLS of a simple three-node (source-destination-eavesdropper) PLC system was analyzed in \cite{MoMaAiBh:19}. 
In \cite{SaMaChAl:18}, the authors proposed an artificial noise (AN)-aided PLS for OFDM-based hybrid parallel PLC/wireless systems. Subsequently, the authors in \cite{CaPoRi:19} studied the PLS of a hybrid PLC/wireless system and concluded that this hybrid communication improved the PLS of low bit-rate communication. Through extensive experiments, the authors in \cite{CaPoRi:20} studied the security vulnerability of an in-home broadband PLC system when unshielded power cables carry out data transmission. Further, the same authors in \cite{CaPoRi:21}  showed that PLS over PLC networks is possible and provided appropriate design guidelines for PLC modems.  
Recently, the SOP was also evaluated for a non-orthogonal multiple-access (NOMA) scheme in a PLC network in \cite{Ah:21}. The authors observed that a higher rate of impulsive noise degraded the SOP of NOMA users significantly.

None of the works described above 
have studied PLS considering shared links connected via a PH except for \cite{PiTo:14} where the secrecy rate distribution was analyzed. Although the authors in \cite{SaHaAl:17, MoMaAiBh:19} studied the effect of correlation on the PLS performance, they did not consider a pinhole-based model for correlation; moreover, the system model is simple and consists of a three-node structure, with or without a relay, and a single destination. The authors also considered exactly the same impulsive noise at the destination and eavesdropper, which is an ideal assumption. Further, no work in the literature (including \cite{PiTo:14}) has considered node scheduling in a multi-user PLC system for secrecy improvement. This is achieved in the present work, and moreover, deviating from an idealistic assumption of the same impulsive noise at the receivers, we assume independent impulsive noise arrival processes at the destinations and the eavesdropper. In a large-scale PLC network with many independent noise sources, independent impulsive noise arrival processes at different nodes is a practical assumption though some spatial correlation exists between noises at different nodes in PLC. This makes our analysis more practical compared to the existing literature. This assumption may lead to the analysis where artificial impulsive noise is introduced into the system for secrecy purposes.  

Motivated by the above discussion, we analyze the secrecy of a PLC-based Internet-of-Things (IoT) network where a single legitimate source, multiple legitimate destinations, and an eavesdropper are connected via a PH. The transmitter-to-pinhole link is shared between the destinations and the eavesdropper. With the objective of improving secrecy performance, a destination among multiple destinations is scheduled optimally for communication.  PLC-based IoT network with distributed low-cost SMs considered in this paper was the motivation behind applying node scheduling in particular. It is well-known from wireless research that node scheduling can improve secrecy without increasing the complexity of the network \cite{kundu_dual_hop_regenerative,kundu_proactive_relay_selection,kundu_cooperative_threshold,kundu_small_cell_networks,kundu_Recurrent neural_network_assisted,kundu_Ergodic_secrecy_rate_of_optimal,shashi_Ergodic_Secrecy_Rate_of_Optimal_Source_Destination}.  
As the PLC channel is a wireline channel where Bernoulli-Gaussian impulsive noise exists and the channel does not undergo fading in time, results from wireless research do not apply. Hence, our aim in this paper is to extend the optimal node scheduling technique already prevalent in wireless networks to the wireline PLC network to improve PLS across different networks and also to show the detrimental effect of the pinhole on performance.

Our main contributions are: 
\begin{itemize}
    
 \item  We propose for the first time an optimal destination scheduling scheme in the pinhole-based PLC network to maximize the secrecy performance.
 
 \item We show the detrimental effect of correlation between channels due to pinhole on the secrecy performance.
 
 \item Deviating from the existing assumption of the same impulsive noise at both the destination and eavesdropper, we consider an independent impulsive noise arrival process in the destinations and eavesdropper.
 
 \item  We derive the computable expressions for the ASC and probability of intercept (POI) performance over many different networks.
 
 \item In order to provide further insights, we derive the approximate asymptotic ASC and approximate POI in closed form. 
 
 \item We show that the introduction of artificial impulsive noise at the nodes may significantly improve the secrecy performance of the system. 

\end{itemize}
 

\textit{Notation:} $\mathbb{E}[\cdot]$ denotes the expectation of its argument, $\mbox{Pr}[\cdot]$ is the probability of an event, 
$F_X (\cdot)$ represents the cumulative distribution function (CDF) of the random variable (RV) $X$, and
$f_X (\cdot)$ is the corresponding probability density function (PDF).

\section{System Model}
\label{sec_system_model}
\begin{figure}
  \centering
  \includegraphics[width=1\linewidth]{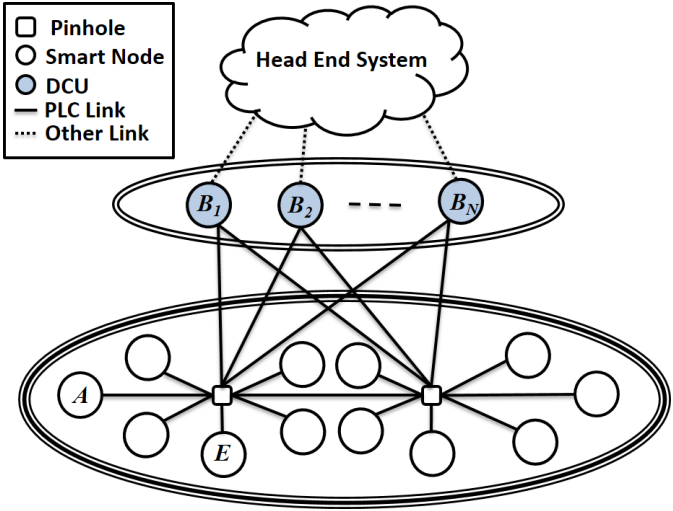}  
 \vspace{-.1cm}
  \caption{Pinhole-based PLC network.}
  \label{fig1a}
\end{figure}
\begin{figure}
  \centering
  \includegraphics[width=.5\linewidth]{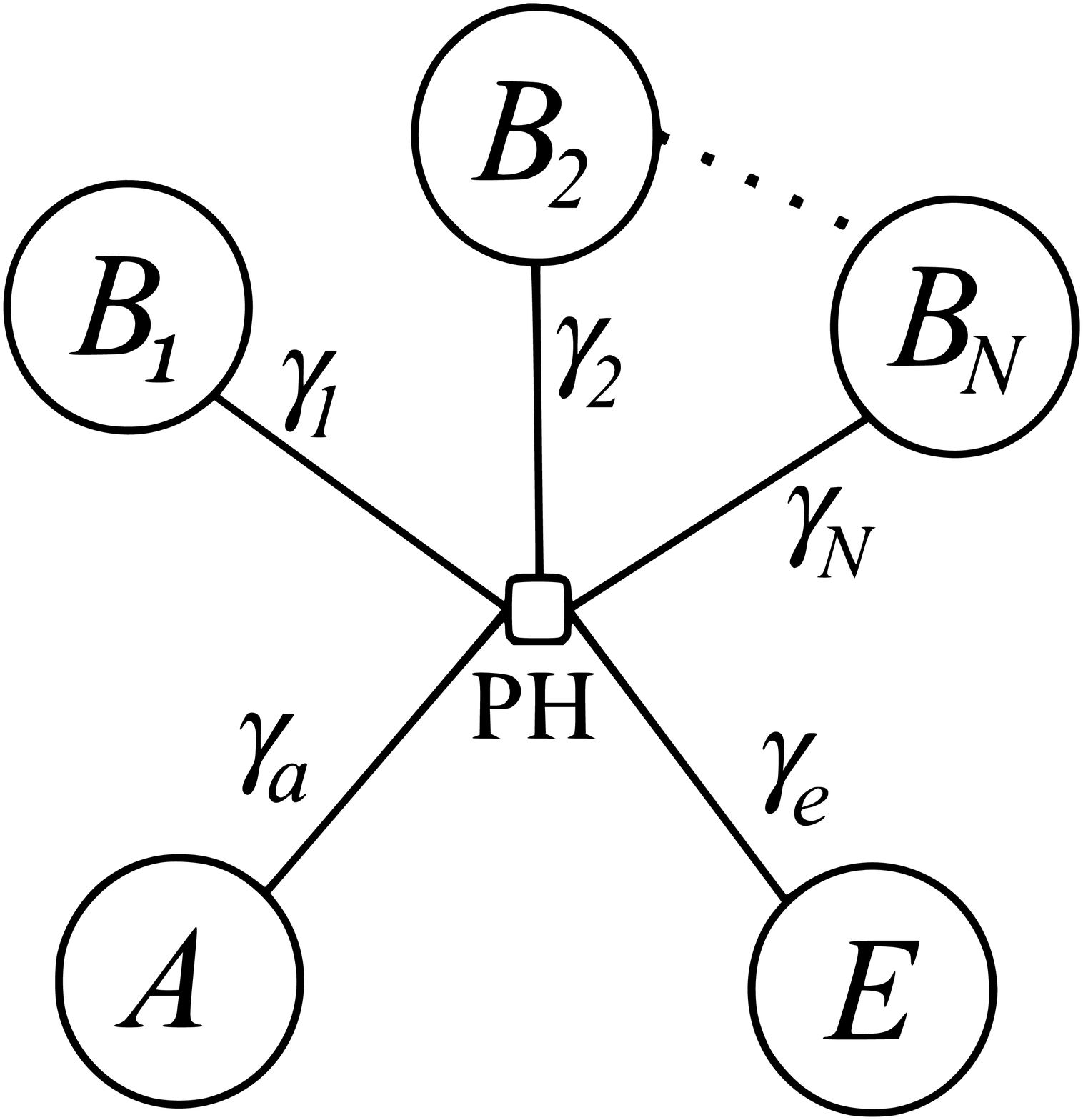}  
  \caption{Considered model.}
  \label{fig1b}
  \vspace{-.1cm}
\end{figure}

We consider a PLC network topology as shown in Fig. \ref{fig1b} which is a part of the larger pinhole-based PLC network shown in Fig. \ref{fig1a}. It consists of $(N+2)$ nodes connected to a pinhole, PH, where the legitimate SM node $A$ communicates with one of the scheduled DCUs $\{B_n\}$ as and when it has any data to update to the head-end system, where $n\in\{1, 2, \ldots, N\}$, in the presence of an eavesdropper $E$.  The source or a centralized control unit will schedule the optimal destination $B_{n^*}$, where ${n^*}$ denotes the scheduled destination index among $n\in\{1, 2, \ldots,N\}$, in order to maximize the system secrecy during the transmission of data. Here we note that each line connected to the PH may have a random number of branching thus impedance discontinuities along the line that are not specifically shown in  Fig \ref{fig1b}. The exact knowledge of these impedance discontinuities is very unlikely to be available, and hence a statistical model already available from the empirical PLC channel data is adopted for the channel modeling in the next section.

We shall now describe the channel and noise models for the considered system.
\subsection{Channel Model}
\label{channel_model}
We assume that the average channel gains of the links $A$-PH, PH-$B_n$ (where $n \in \{ 1,2,\ldots,N \}$) and PH-$E$, denoted as $\gamma_{a}$, $\gamma_n$, and $\gamma_{e}$, respectively,  follow independent log-normal statistics.   It is well known that signal propagation in the power line is multipath based. Mismatched termination and random impedance discontinuities along the line cause successive reflections of the propagation signal. Thus, path amplitudes in different lines are a function of a product of several random propagation effects which leads to log-normality in the central limit. The log-normal distribution is the result of observing the channel gain over many different networks or over different links in a given network, provided this network is complex with many branches and nodes. Since log-normality does not change under power, path gains are log-normally distributed as well. Log-normal statistics is assumed following \cite{Galli_A_simplified_model, Ga:11, Versolatto_Tonello_PLC_channel_characterization, Tonello_Pittolo_In_Home_Power_Line_Communication, SaHaAl:17, Ah:21}. 
We note that the distribution of the log-normal RV $\gamma_\beta$ where $\beta\in\{a,n,e\}$ with parameters $m_\beta$ and $s_\beta^2$, denoted as $ \mathcal{LN}\left(m_\beta,s_\beta^2\right)$, is expressed as \cite{CrSh:88}
\begin{align}
f_{\gamma_\beta}(t) = \frac{1}{t\sqrt{2\pi s_{\beta}^2}}
\exp\left(-\frac{1}{2}\left(
\frac{\ln{t} - m_{\beta}}{s_{\beta}}
\right)^2
\right), \; t \geq 0.
\label{e1}
\end{align}
The parameters $m_\beta$ and $s_\beta^2$ are the mean and variance of the associated Gaussian RV $\ln({\gamma_\beta})$, where $\gamma_\beta$ is the log-normal RV. The corresponding CDF is expressed as
\begin{align}
F_{\gamma_\beta}(t) =1- Q\left(\frac{\ln{t}-m_{\beta}}{s_{\beta}}\right), \; t \geq 0,
\label{e1_2}
\end{align}
where $Q(t)=\int_{t}^{\infty}(1/{\sqrt{2\pi}})\exp(-t^2/2)\mbox{d}t$ is the Gaussian $Q$-function. The average power of a log-normally distributed link is then \cite{CrSh:88}
\begin{align}
\label{eq_expectation}
    \mathbb{E}[\gamma_\beta]=\exp\lb(m_\beta+\frac{1}{2}s_\beta^2\rb).
\end{align}
We assume that all the PH-$B_n$ links have independent identical distributions, i.e., $m_n=m_b$ and $s_n=s_b$ for each $n$.
Since the PLC network has a PH, the end-to-end channel power gains of the links $A$-$B_n$ and $A$-$E$  are given as $\gamma_a \gamma_n$ and $\gamma_a \gamma_e$, respectively.

It is to be noted that the PLC channels are frequency selective in nature. Hence, in practice, MCM schemes such as OFDM and FBMC modulation are used to convert the wideband frequency selective channel into multiple narrowband frequency flat sub-channels \cite{PiTo:14}.  In this paper, we assume that a suitable MCM scheme is utilized and therefore, the channel is frequency flat at the sub-channel level. The analysis in this paper is also valid for narrow-band single-carrier modulation when operating at a certain frequency.

\subsection{Noise Model}
\label{channel_model_n}

As mentioned, the additive noise of the PLC channel is well modelled by a Bernoulli-Gaussian process to combine the effect of both the background noise and the impulsive noise 
\cite{PLC_BOOK:10}. 
In this model, the background noise at $B_n$ (for any $n$) and $
E$ are independently distributed Gaussian RVs with zero mean and variances $\epsilon_{nW}^2$ and $\epsilon_{eW}^2$, respectively, while the impulsive noise at $B_n$ (for any $n$) and at $E$ are independently distributed Gaussian RV with zero mean and variances
\begin{align}
\epsilon_{nI}^2=\eta_n\epsilon_{nW}^2~\text{and}~\epsilon_{eI}^2=\eta_e\epsilon_{eW}^2,
\end{align}
respectively, where $\eta_n$ and $\eta_e$ represent the power ratios of the impulsive noise to the background noise at $B_n$ (for any $n$) and $E$, respectively. The background noise is omnipresent at the nodes; however, the impulsive noise occasionally  affects the transmission following a Bernoulli trial and hence, a Bernoulli sample ($0$ or $1$) is multiplied with the impulsive noise sample.  The Bernoulli samples at $B_n$ for any $n$ and $E$ occur independently, where $p_n$ and $p_e$ are the probabilities of occurrence of the Bernoulli sample $1$ at $B_n$ and $E$, respectively. The parameter $p_n$ or $p_e$ represents the average rate of occurrence of the impulsive noise at the corresponding node. Hence, the effective noise variance at any node considering both background noise and impulsive noise 
is 
\begin{align} \label{eq_impulsive}
\begin{cases}
\epsilon_{\xi W}^2 & \text{Probability } (1-p_\xi)\\
\epsilon_{\xi W}^2+\epsilon_{\xi I}^2=\epsilon_{\xi W}^2(1+\eta_\xi) & \text{Probability  } p_\xi,
\end{cases}
\end{align}
where $\xi\in\{n,e\}$.   We assume that every node $B_n$ (for $n=1,2,\ldots,N$) suffers from noise with identical statistics, i.e., $p_n=p_b$, $\epsilon_{nW}^2=\epsilon_{bW}^2$, $\epsilon_{nI}^2=\epsilon_{bI}^2$, and $\eta_n=\eta_b$, where the subscript $b$ is used to denote destination nodes or DCUs.

 Though all receiving nodes are connected with each other through the power line, in a large-scale PLC network where nodes are far apart, the undesired disturbances may be assumed localized and hence independent impulsive noise arrival process at the selected destination and the eavesdropper can be assumed.  In this case, there can be four possible events with corresponding probabilities: 
\begin{align} 
\label{eq_events}
\begin{cases}
\text{1. No impulsive noise at either node} &(1-p_{b})(1-p_e)\\ 
\text{2. Impulsive noise only at $E$}&(1-p_{b})p_e\\
\text{3. Impulsive noise only at $B_{n^*}$}&p_{b}(1-p_e)\\
\text{4. Impulsive noise at both nodes}& p_{b}p_e.
\end{cases}
\end{align}
These probabilities will be required to analyze the performance metrics for the system. These performance metrics are defined in the following subsection.

\subsection{Destination Scheduling Scheme}

For a given eavesdropping link quality and identical
noise statistics in destination nodes with all nodes connected to the common PH, selecting the link which maximizes the end-to-end channel power gain of the link $A$-$B_n$ among all  $n\in\{1, 2, \ldots,N\}$ provides the maximum security in terms of the achievable secrecy rate.  Hence, we propose to schedule the destination node with the link that maximizes PH-$B_n$ channel power gain. Generally, channel state information (CSI) is exchanged between nodes before data communication begins. The transmitter or a centrally designated node with CSI can decide the intended destination to be scheduled.    Node scheduling can be achieved by adopting an existing node scheduling protocol in a multi-node network before actual data transmission begins.

We obtain two performance metrics for these kinds of systems, the ASC and POI, over many networks. We first define these metrics first.

\subsection{Secrecy Capacity}
\label{SC}
Secrecy capacity is defined as the difference between the achievable rates of the source-to-destination link and the source-to-eavesdropper link \cite{PiTo:14}.
 However, the exact achievable rate of a point-to-point link affected by Bernoulli-Gaussian impulsive noise is extremely difficult to obtain in closed form; thus, the authors in \cite{Vu_Hariharan_Estimating_Shannon_and Constrained_Capacities} proposed bounds for the same. These bounds are shown to be very tight in all signal-to-noise (SNR) ranges. In this paper, we use an upper bound from \cite{Vu_Hariharan_Estimating_Shannon_and Constrained_Capacities}
to approximate the achievable rate in bits per channel use (bpcu).

Considering both background noise and impulsive noise at the receiver, the approximate capacity at  $B_{n^*}$ is defined as 
\begin{align}
\label{eq_capacity}
C_{{n^*}}=\delta_{1,b}C_{1,{n^*}} +\delta_{2,b}C_{2,{n^*}},
\end{align}
where
\begin{align}
\label{eq_capacity_def}
\delta_{1,b}&=(1-p_{b}),\;\;\;\;\;\;\;\;\;\;\;\;\;\;\;\;\;\;\;\;\;\delta_{2,b}=p_{b},\nn\\
C_{1,{n^*}}&=\log_2\left(1+\alpha_{1,b}\gamma_a\gamma_{n^*}\right),
\nn\\
C_{2,{n^*}}&=\log_2\left(1+\alpha_{2,b}\gamma_a\gamma_{n^*}\right),\nn\\
\alpha_{1,b}&=\frac{P}{\epsilon_{bW}^2},\;\;\;\;\;\;\;\;\;\;\;\;\;\;\;\;\;\;\;\;\;\;\;\;\;\;\alpha_{2,b}=\frac{P}{\epsilon_{bW}^2(1+\eta_b)},\nn\\
\gamma_{n^*}&=\max\{\gamma_{1},\gamma_{2}, \ldots, \gamma_{N} \}.
\end{align}
In (\ref{eq_capacity}), the subscript `1' indicates the presence of only background noise while subscript `2' indicates the presence of both the background and impulsive noise, $P$ is the transmit power at node $A$, $\alpha_{1,b}$ and $\alpha_{2,b}$ are the transmit power to the background noise and transmit power to the background and impulsive noise ratios, respectively, $C_{1,{n^*}}$ and $C_{2,{n^*}}$ are the achievable rates 
when only the background noise is present with probability $\delta_{1,b}$, and when the background noise with the impulsive noise is present with probability $\delta_{2,b}$, respectively. The achievable rate at $E$ can be formulated simply by replacing $n^*$ and $b$ with $e$ in (\ref{eq_capacity}) and (\ref{eq_capacity_def}).

 Taking into account the four possibilities for the manner in which the Bernoulli-Gaussian noise can affect the selected destination and the eavesdropper as in (\ref{eq_events}), the secrecy capacity of the proposed system can be written with the help of  (\ref{eq_capacity})  while imposing the positive secrecy constraint as
\begin{align}
\label{eq_instant_capacity}
    C_S&=\sum_{j=1}^{2}\sum_{k=1}^{2}\delta_{j,{b}}\delta_{k,e}\max\{\lb(C_{j,{n^*}}-C_{k,e}\rb),0\}.
\end{align}
In the above equation, the summation index $j$ and $k$ are to add the secrecy capacities under all four conditions given in (\ref{eq_events}).


\subsection{Probability of Intercept (POI)}
\label{POI_p}
The POI is defined as the probability of the event when the
eavesdropper can successfully decode $A$'s transmission. This happens when the channel capacity of the link  $A$-$B_n$ is lower than that of the link $A$-$E$. 
The POI derivation needs to incorporate the different probabilities of arrival of the impulsive noise at $B_{n^*}$ and $E$. Incorporating the four ways that Bernoulli-Gaussian noise can affect the destination and eavesdropper as in (\ref{eq_events}), the POI for the system with the optimal destination scheduling scheme can be expressed similarly to the expression in (\ref{eq_instant_capacity}) for the secrecy capacity as  
\begin{align}
{\cal P}_I&=\sum_{j=1}^{2}\sum_{k=1}^{2}\delta_{j,{b}}\delta_{k,e}\mbox{Pr}\lb[\alpha_{j,{b}}\gamma_a\gamma_{n^*}<\alpha_{k,e}\gamma_a\gamma_e\rb].
    \label{e6}
\end{align} 
In the above equation, the summation indices $j$ and $k$ are used to add the POI for each of the four cases in (\ref{eq_events}). 


The following section presents the proposed destination scheduling scheme and its impact on performance.

\section{Optimal Destination Scheduling Scheme}
\label{sec_selection}
In this section, we derive the performance metrics for the optimal destination scheduling scheme to maximize the secrecy performance. 
Towards this goal, we first derive the distribution of $\gamma_{n^*}$ in (\ref{eq_capacity_def}) as the optimal secrecy capacity is dependent on $\gamma_{n^*}$.  Due to the independent and identically distributed PH-$B_n$ links, the CDF of $\gamma_{n^*}$ can be obtained using the CDF of $\gamma_n$ in (\ref{e1_2}) as 
\begin{align}
 F_{\gamma_{n^*}}(x)&=\mbox{Pr}\lb[\max_{n=1, 2, \ldots,N}\{\gamma_n\}\le x\rb]=\mbox{Pr}\lb[\gamma_n\le x\rb]^N
 \nn\\&
 =\left(1-Q\left(\frac{\ln{x}-m_b}{s_b}\right)\right)^N .
    \label{e8}
\end{align}
The corresponding PDF can be obtained by differentiating (\ref{e8}) as
\begin{align}
    f_{\gamma_{n^*}}(x)&=N\left(1-Q\left(\frac{\ln{x}-m_b}{s_b}\right)\right)^{N-1}
    \nn\\
    &\times
    \frac{1}{x\sqrt{2\pi s_b^2}}
\exp\left(-\frac{1}{2}\left(
\frac{\ln{x} - m_b}{s_b}
\right)^2
\right).
\label{e9}
\end{align}

With the help of (\ref{e8}) and (\ref{e9}), we will find the ASC and POI of the proposed optimal destination scheduling scheme in the subsequent sections.

\subsection{Average Secrecy Capacity (ASC)}
\label{sub_sec_ASC}
The ASC performance across different networks for the proposed destination scheduling scheme can be evaluated by averaging the secrecy capacity in (\ref{eq_instant_capacity}) with respect to $\gamma_a$, $\gamma_e$, and $\gamma_{n^*}$. Here, we point out that due to the presence of the common link $A$-PH, the links $A$-$B_{n^*}$ and $A$-$E$ are correlated. As a result, the averaging is carried out conditioned on $\gamma_a$, and then finally we average the result with respect to the same. The ASC $\bar{C}_S$ can be evaluated by  imposing the positive secrecy rate constraint $\alpha_{j,{b}}\gamma_{n^*}>\alpha_{k,e}\gamma_e$ for all combinations of $j\in\{1,2\}$ and $k\in\{1,2\}$ in (\ref{eq_instant_capacity}) as

 \begin{align}
 \bar{C}_S= \sum_{j=1}^{2}\sum_{k=1}^{2}\delta_{j,{b}}\delta_{k,e}\int_{x=0}^{\infty}
    \lb(I_{j,k,n^*}(x)-I_{k,j,e}(x)\rb)
     f_{\gamma_a}(x)
    \mbox{d}x,
\label{e10}
\end{align}
where 
\begin{align}
\label{eq_integral_i1i2}
&I_{j,k,{n^*}}(x)\nn\\
&=\int_{y=0}^{\infty}
    \int_{z=0}^{\frac{\alpha_{j,{b}}}{\alpha_{k,e}}y}
    \log_2\left(1+\alpha_{j,{b}} x y\right)
    f_{\gamma_e}(z)
    f_{\gamma_{n^*}}(y)
    \mbox{d}z
    \mbox{d}y\nn\\
&=\int_{y=0}^{\infty}
    \log_2\left(1+\alpha_{j,{b}} x y\right)
    F_{\gamma_e}\lb(\frac{\alpha_{j,{b}}}{\alpha_{k,e}}y\rb)
    f_{\gamma_{n^*}}(y)
    \mbox{d}y,\\
&I_{k,j,e}(x)\nn\\
&=\int_{z=0}^{\infty}
     \int_{y=\frac{\alpha_{k,e}}{\alpha_{j,{b}}}z}^{\infty}\log_2\left(1+\alpha_{k,e} x z\right)
    f_{\gamma_{n^*}}(y) f_{\gamma_e}(z)
        \mbox{d}y
        \mbox{d}z
    \nn\\
    &
    =\int_{z=0}^{\infty}\log_2\left(1+\alpha_{k,e} x z\right)
    f_{\gamma_e}(z)
    \nn\\&\times
    \left(1-F_{\gamma_{n^*}}\lb(  \frac{\alpha_{k,e}}{\alpha_{j,{b}}}z     \rb)\right)
    \mbox{d}z.
    \label{e11}
\end{align}
$I_{j,k,{n^*}}(x)$ and $I_{k,j,e}(x)$ are obtained by averaging $C_{j,{n^*}}$ and $C_{k,e}$ in (\ref{eq_instant_capacity}), respectively, with respect to $\gamma_{n^*}$ and $\gamma_e$ conditioned on $\gamma_a$. The integration limits over $z$ and $y$ in (\ref{eq_integral_i1i2}) and (\ref{e11}), respectively, are due to the positive secrecy constraint  $\alpha_{j,{b}}\gamma_{n^*}>\alpha_{k,e}\gamma_e$. 
By replacing  $f_{\gamma_e}(\cdot)$, $F_{\gamma_e}(\cdot)$,  $F_{\gamma_{n^*}}(\cdot)$, and $f_{\gamma_{n^*}}(\cdot)$ from (\ref{e1}), (\ref{e1_2}), (\ref{e8}), and (\ref{e9}), respectively, we can express  the above integrals as
\begin{align}
   & I_{j,k,{n^*}}(x) 
   =\int_{y=0}^{\infty}N
    \log_2\Bigg(1+\alpha_{j,{b}} x y\Bigg)
    \times
    \nn\\
    &
    \Bigg(1-Q\Bigg(\frac{\ln\lb(\frac{\alpha_{j,{b}}}{\alpha_{k,e}}y\rb)-m_e}{s_e}\Bigg)\Bigg)
\left(1-Q\left(\frac{\ln{y}-m_b}{s_b}\right)\right)^{N-1}\nn\\
    &\times
    \frac{1}{y\sqrt{2\pi s_b^2}}
\exp\left(-\frac{1}{2}\left(
\frac{\ln{y} - m_b}{s_b}
\right)^2
\right)\mbox{d}y,
   \label{e12a}
    \end{align}
    and
    \begin{align}
I_{k,j,e}(x)&=
\int_{z=0}^{\infty}
    \log_2\left(1+\alpha_{k,e} x z\right)
    \nn\\
        &\times
        \Bigg(1-\Bigg(1-Q\Bigg(\frac{\ln\lb(  \frac{\alpha_{k,e}}{\alpha_{j,{b}}}z\rb)-m_b}{s_b}\Bigg)\Bigg)^N\Bigg)
        \nn\\
        &
        \times
    \frac{1}{z\sqrt{2\pi s_b^2}}
\exp\left(-\frac{1}{2}\left(
\frac{\ln z - m_e}{s_e}
\right)^2
\right)
    \mbox{d}z\,.
    \label{e12b}
    \end{align}

In general, integrals such as (\ref{e12a}) and (\ref{e12b}) do not admit a closed-form solution. For this reason, our approach in the following will be to transform the integrals in such a way that the Gauss-Hermite quadrature rule can be applied to obtain a computable form.

By applying variable substitutions $t=(
{\ln{y} - m_b})/{s_b}$ and $t=({\ln{z} - m_e})/{s_e}$ in (\ref{e12a}) and (\ref{e12b}) respectively, we transform these expressions into the simpler form 

\begin{align}    
& I_q(x)
    =\int_{t=-\infty}^{\infty}
\Psi_q(x,t)f_Q(t)
    \mbox{d}t,
    \label{e13}
\end{align}
 where $q\in\{(j,k,n^*),(k,j,e)\}$,
\begin{align}    
&\Psi_{j,k,n^*}(x,t)\nn\\&=
\log_2\left(1+\alpha_{j,{b}} x \exp(s_bt+m_b)\right)N\left(1-Q\left(t\right)\right)^{N-1}
    \nn\\
    &\times
    \hspace{-0.1cm}
    \Bigg(1-Q\Bigg(\frac{s_bt+m_b+\ln\lb(\frac{\alpha_{j,{b}}}{\alpha_{k,e}}\rb)-m_e}{s_e}\Bigg)\Bigg),
    \label{e13a}
\end{align}
and
  \begin{align} &\Psi_{k,j,e}(x,t)=
    \log_2\left(1+\alpha_{k,e} x \exp(s_e t+m_e)\right)
    \nn\\
    &\times    
    \Bigg(1-\Bigg(1-Q\Bigg(\frac{s_e t+m_e+\ln\lb(  \frac{\alpha_{k,e}}{\alpha_{j,{b}}}\rb)-m_b}{s_b}\Bigg)\Bigg)^N\Bigg)
 .
 \label{e13b}
\end{align}
Note that the expression in (\ref{e13}) is actually evaluating the expectation of an arbitrary function $\Psi_q(x,t)$ with respect to the Gaussian probability density function.
It has a numerical solution using the Gauss-Hermite quadrature method \cite{book_Abramowitz_Stegun} and hence
  $I_q(x)$ can be computed as
\begin{align}
    I_q(x)&=\sum_{\ell=1}^{L}\omega_{\ell}\Psi_q(x,\theta_{\ell}),
    \label{e16}
\end{align}
where $L$ is the number of points, $\omega_{\ell}$ are the weights, and  $\theta_{\ell}$ is the abscissa of the  Gauss-Hermite quadrature method.

{Next, applying (\ref{e16}) in (\ref{e10}), the ASC can be presented in a single integral form as}
\begin{align}
\bar{C}_S
  &
   =
   \sum_{j=1}^{2}\sum_{k=1}^{2}\delta_{j,{b}}\delta_{k,e}
   \int_{x=0}^{\infty}
        \frac{I_{j,k,n^*}(x)-I_{k,j,e}(x)}{x\sqrt{2\pi s_a^2}}
        \nn\\&\times
\exp\left(-\frac{1}{2}\left(
\frac{\ln{x} - m_a}{s_a}
\right)^2
\right)\mbox{d}x.
\label{eq_17}
\end{align}
The above equation still does not admit a closed-form solution and we again resort to the Gauss-Hermite quadrature method.
Using the change of variable $t=({\ln{x} - m_a})/{s_a}$, we can write
      \begin{align}
      \bar{C}_S&= \sum_{j=1}^{2}\sum_{k=1}^{2}\delta_{j,k,{b}}\delta_{k,j,e}
   \int_{t=-\infty}^{\infty}\lb[I_{j,k,n^*}(\exp\left(ts_a+m_a\right))
  \rb.\nn\\
  &\lb.
   -I_{k,j,e}(\exp\left(ts_a+m_a\right))\rb]
   f_Q(t)
   \mbox{d}t.
\label{17}
\end{align}
Finally, we see that (\ref{17}) is in the same form as (\ref{e13}); hence, using the Gauss-Hermite quadrature rule in (\ref{e16}), the ASC can be obtained in a numerically computable form.

\subsection{Asymptotic ASC}
\label{sec_asym_asc}
The computable expression presented in (\ref{17}) is difficult to analyze with respect to the parameters $N$, $P$, $m_\rho$ and  $s_\rho$ for $\rho\in\{a,b,e\}$, hence, we provide a closed-form asymptotic ASC in the high-SNR regime, i.e., when $P$ tends to infinity.

\begin{proposition}
\label{proposition}The approximate asymptotic ASC of the destination scheduling scheme is 
\begin{align}
\label{eq_asym_ASC_propos}
 \bar{C}_S&\approx\sum_{j=1}^{2}\sum_{k=1}^{2}\delta_{j,{b}}\delta_{k,e} \lb((I_{j,k,n^*}^{(+)}+I_{j,k,n^*}^{(-)})\rb.\nn\\&\lb.-(I^{(0)}_{{k,j,e}}+I^{(+)}_{{k,j,e}}+I^{(-)}_{{k,j,e}})\rb),
\end{align}
where 
\begin{align}
\label{eq_final_I1jkn}
&I_{j,k,n^*}^{(+)}=\frac{N\mathcal{D}_{n^*}^{(N-1)}}{\ln{2}}\Bigg[\frac{\lb(\ln\left(\tilde{\alpha}_{j,{b}}\right)+m_b\rb)Q\lb(\mathcal{B}_{n^*}^{(N-1)}\rb)}{\mathcal{A}_{n^*}^{(N-1)}}\Bigg.\nn\\
&\Bigg.-\frac{s_b\exp{\lb(-\frac{(\mathcal{B}_{n^*}^{(N-1)})^2}{2}\rb)}}{(\mathcal{A}_{n^*}^{(N-1)})^2\sqrt{2\pi}}
+\frac{s_b\mathcal{B}_{n^*}^{(N-1)}Q\lb(\mathcal{B}_{n^*}^{(N-1)}\rb)}{(\mathcal{A}_{n^*}^{(N-1)})^2}\Bigg],\\
\label{eq_final_I2jkn}
&I_{j,k,n^*}^{(-)}=\frac{N}{\ln{2}}\sum_{n=0}^{N-1}\binom{N-1}{n}(-1)^n\mathcal{D}_{n^*}^{(n)}\times\nn\\
&\Big [\lb(\ln\left(\tilde{\alpha}_{j,{b}}\right)+m_b\rb)\frac{1-Q\lb(\bar{\mathcal{B}}_{n^*}^{(n)}\rb)}{\mathcal{A}_{n^*}^{(n)}}+\frac{s_b\exp{\lb(-\frac{(\bar{\mathcal{B}}_{n^*}^{(n)})^2}{2}\rb)}}{(\mathcal{A}_{n^*}^{(n)})^2\sqrt{2\pi}}\Big.\nn\\
&\Big .+\frac{s_b\bar{\mathcal{B}}_{n^*}^{(n)}\lb(1-Q\lb(\bar{\mathcal{B}}_{n^*}^{(n)}\rb)\rb)}{(\mathcal{A}_{n^*}^{(n)})^2}\Big ],
\end{align}

\begin{align}
\label{eq_Ikjezero}
I_{k,j,e}^{(0)}&=\frac{1}{\ln{2}}\lb(\ln\left(\tilde{\alpha}_{k,e} \right)+m_e\rb),\\
\label{eq_Ikje_plus}
I^{(+)}_{{k,j,e}}
&=\frac{\mathcal{D}_{k,j,e}^{(N)}}{\phi_{e}\ln{2}}\Bigg[\lb(\ln\left(\tilde{\alpha}_{k,e} \right)+m_e-\frac{s_e\lambda_{k,j,e}}{\phi_{e}}  \rb)
\frac{Q\lb(\mathcal{B}_{k,j,e}^{(N)} \rb)}{\mathcal{A}_{e}^{(N)}} \Bigg.\nn\\
&\Bigg.-\frac{s_e}{\phi_{e}(\mathcal{A}_{e}^{(N)})^2\sqrt{2\pi}}\exp{\lb(-\frac{(\mathcal{B}_{k,j,e}^{(N)})^2}{2}\rb)}
\Bigg.\nn\\
&\Bigg.
+\frac{s_e\mathcal{B}_{k,j,e}^{(N)}Q\lb(\mathcal{B}_{k,j,e}^{(N)}\rb)}{\phi_{e}(\mathcal{A}_{e}^{(N)})^2}\Bigg],\\
\label{eq_Ikje_minus}
I^{(-)}_{{k,j,e}}&=\frac{1}{\phi_{e}\ln{2}} \sum_{n=0}^{N}\binom{N}{n}(-1)^n\bar{\mathcal{D}}_{k,j,e}^{(n)}\nn\\
&\times\Bigg[\lb(\ln\left(\tilde{\alpha}_{k,e} \right)+m_e-\frac{s_e\lambda_{k,j,e}}{\phi_{e}}\rb)\frac{1-Q\lb(\bar{\mathcal{B}}_{k,j,e}^{(n)}\rb)}{\mathcal{A}_{e}^{(n)}  }\Big.\nn\\
&\Bigg.+\frac{s_e}{\phi_{e}}\frac{1}{(\mathcal{A}_{e}^{(n)}  )^2\sqrt{2\pi}}\exp{\Big(-\frac{(\bar{\mathcal{B}}_{k,j,e}^{(n)} )^2}{2}\Big)}
\Bigg.\nn\\
&\Bigg.+\frac{s_e}{\phi_{e}}\frac{\bar{\mathcal{B}}_{k,j,e}^{(n)} \lb(1-Q\lb(\bar{\mathcal{B}}_{k,j,e}^{(n)} \rb)\rb)}{(\mathcal{A}_{e}^{(n)}  )^2}\Bigg],\\
\label{eq_constants_jkn_n}
\phi_{e}&=\frac{s_e}{s_b}, \\
\mathcal{A}_{n^*}^{(n)}&=\sqrt{2nK_1+1},~
\mathcal{B}_{n^*}^{(n-1)}=\frac{(n-1)K_2}{\mathcal{A}_{n^*}^{(n-1)}},\nn\\
\bar{\mathcal{B}}_{n^*}^{(n)}&=-\frac{nK_2}{\mathcal{A}_{n^*}^{(n)}},~ 
\mathcal{C}_{n^*}^{(n)}=2nK_3,
 \nn\\
\mathcal{D}_{n^*}^{(n)}&=\exp\lb(-\frac{1}{2}\lb(\mathcal{C}_{n^*}^{(n)}-(\bar{\mathcal{B}}_{n^*}^{(n)})^2\rb)\rb),\\
\label{eq_I22_constants}
\mathcal{A}_{e}^{(n)}&=\sqrt{2nK_1+\frac{1}{\phi_{e}^2}},~
\mathcal{B}_{k,j,e}^{(n)}=\frac{nK_2+\frac{\lambda_{k,j,e}}{\phi_{e}^2}}{\mathcal{A}_{e}^{(n)}},\nn\\
\bar{\mathcal{B}}_{k,j,e}^{(n)}&=\frac{-nK_2+\frac{\lambda_{k,j,e}}{\phi_{e}^2}}{\mathcal{A}_{e}^{(n)}},~
\mathcal{C}_{k,j,e}^{(n)}=2nK_3+\frac{\lambda_{k,j,e}^2}{\phi_{e}^2},\nn\\
\mathcal{D}_{k,j,e}^{(n)}&=\exp{\lb(-\frac{1}{2}\lb(\mathcal{C}_{k,j,e}^{(n)}-\lb(\mathcal{B}_{k,j,e}^{(n)}\rb)^2\rb)\rb)},\nn\\
\bar{\mathcal{D}}_{k,j,e}^{(n)}&=\exp{\lb(-\frac{1}{2}\lb(\mathcal{C}_{k,j,e}^{(n)}-(\bar{\mathcal{B}}_{k,j,e}^{(n)})^2\rb)\rb)},
\end{align}
and $K_1, K_2, K_3 \in \mathbb{R}$ are fitting parameters for the approximate $Q$-function \cite{Q_func_Approximation}.
\end{proposition}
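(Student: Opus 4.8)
The plan is to turn the exact single integrals \eqref{e12a}--\eqref{e12b} into closed form by combining a high-SNR linearisation of the logarithm with an exponential approximation of the Gaussian $Q$-function. First I would let $P\to\infty$ and replace $\log_2(1+\alpha_{j,b}xy)$ by $\log_2(\alpha_{j,b}xy)$ and $\log_2(1+\alpha_{k,e}xz)$ by $\log_2(\alpha_{k,e}xz)$. The decisive simplification is that the logarithm then splits additively, so the shared link $\gamma_a=x$ enters only through $\ln x$; since the positive-secrecy region in \eqref{e12a}--\eqref{e12b} does not depend on $x$, the outer average over $\gamma_a$ in \eqref{eq_17} acts on $\ln x$ alone and returns $\mathbb{E}[\ln\gamma_a]=m_a$. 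Absorbing this constant into the effective constants $\tilde\alpha_{j,b}$ and $\tilde\alpha_{k,e}$ (with $\ln\tilde\alpha=\ln\alpha+m_a$) eliminates $\gamma_a$, and the finite noise-power ratios $\alpha_{j,b}/\alpha_{k,e}$ keep the constraint CDFs non-degenerate. Each term $I_{j,k,n^*}$ and $I_{k,j,e}$ still carries a divergent $\ln\tilde\alpha$ piece, but these cancel in the difference $I_{j,k,n^*}-I_{k,j,e}$, which is precisely why the ASC saturates.

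Next I would treat the destination and eavesdropper integrals asymmetrically, as the stated constants demand. For the eavesdropper term I keep the secrecy constraint, writing $1-F_{\gamma_{n^*}}(\cdot)=1-(1-Q(\cdot))^N$; the leading ``$1$'' gives the unconstrained, saturating term $I^{(0)}_{k,j,e}=\frac1{\ln2}(\ln\tilde\alpha_{k,e}+m_e)$ of \eqref{eq_Ikjezero}. For the destination term I would drop the constraint CDF, $F_{\gamma_e}(\cdot)\approx1$, which is justified because the scheduled link $\gamma_{n^*}=\max_n\gamma_n$ almost surely dominates $\gamma_e$; this reduces $I_{j,k,n^*}$ to the \emph{unconstrained} average rate of the scheduled link, whose density carries the order-statistic factor $N(1-Q(t))^{N-1}$ from \eqref{e9}. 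In the standardised variable $t$ of \eqref{e13} the logarithm becomes the affine term $\frac1{\ln2}(\ln\tilde\alpha+m+s t)$, explaining the recurring prefactor $(\ln\tilde\alpha+m)$.

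The core of the argument is the exponential $Q$-function approximation of \cite{Q_func_Approximation}, $Q(u)\approx\exp(-(K_1u^2+K_2u+K_3))$ for $u\ge0$, together with $Q(u)=1-Q(-u)$ for $u<0$. I would split the $t$-integration at the sign change of the relevant argument. On the half-line where the argument is negative, a power $(1-Q)^{m}=Q(-\cdot)^{m}$ collapses to a single exponential of a quadratic, producing the single ``$(+)$'' contributions indexed by $(N-1)$ or $(N)$; on the other half-line I binomial-expand $(1-Q)^{m}=\sum_n\binom{m}{n}(-1)^nQ^n$ and approximate each $Q^n$, producing the ``$(-)$'' sums over $n$ in \eqref{eq_final_I2jkn} and \eqref{eq_Ikje_minus}. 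In every case the product of $\exp(-n(K_1t^2+K_2t+K_3))$ with the Gaussian density $f_Q(t)$ is, after completing the square, again Gaussian: the new inverse-variance gives $\mathcal{A}^{(n)}_{n^*}=\sqrt{2nK_1+1}$ (and $\mathcal{A}_e^{(n)}=\sqrt{2nK_1+1/\phi_e^2}$, the factor $\phi_e=s_e/s_b$ entering because the eavesdropper constraint $Q$-argument is scaled by $s_e/s_b$ and offset by $\lambda_{k,j,e}$), the shifted mean gives $\mathcal{B}^{(n)}$ and $\bar{\mathcal{B}}^{(n)}$ (differing only in the sign of the $nK_2$ term, i.e.\ the two half-lines), and the residual normalisation gives $\mathcal{D}^{(n)}=\exp(-\tfrac12(\mathcal{C}^{(n)}-(\mathcal{B}^{(n)})^2))$.

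Finally, each surviving integral is the elementary $\int(a+bt)\exp(-\text{quadratic})\,dt$ over a half-line, which I would evaluate using $\int_c^\infty e^{-t^2/2}\,dt=\sqrt{2\pi}\,Q(c)$ and $\int_c^\infty t\,e^{-t^2/2}\,dt=e^{-c^2/2}$. The constant part $a\propto\ln\tilde\alpha+m$ yields the $Q(\mathcal{B})$-type terms, while the linear part $b\propto s$ yields the two $s$-proportional terms (one from the boundary density $e^{-\mathcal{B}^2/2}$, one from the truncated mean $\mathcal{B}\,Q(\mathcal{B})$); this is exactly the three-term structure of each bracket in \eqref{eq_final_I1jkn}--\eqref{eq_Ikje_minus}. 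Summing the $(0)$, $(+)$ and $(-)$ pieces and reinstating the weights $\delta_{j,b}\delta_{k,e}$ gives \eqref{eq_asym_ASC_propos}. I expect the principal difficulty to be bookkeeping rather than conceptual: justifying and applying the asymmetric treatment of the two constraints, and keeping the completing-the-square constants consistent across the binomial index $n$, the two half-lines, and the extra $\phi_e$ and $\lambda_{k,j,e}$ scaling of the eavesdropper integral.
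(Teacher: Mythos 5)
Your plan reproduces the paper's route in every structural respect: the high-SNR linearization $\log_2(1+\alpha xy)\to\log_2(\alpha xy)$, the asymmetric treatment of the two integrals (dropping the constraint CDF $F_{\gamma_e}\approx 1$ in the destination term as in (\ref{eH1_approx}), while keeping the constraint and expanding $1-(1-Q(\cdot))^N$ in the eavesdropper term to isolate $I^{(0)}_{k,j,e}$), the exponential $Q$-approximation of \cite{Q_func_Approximation} applied after splitting each integration at the sign change of the $Q$-argument, the binomial expansion producing the $(-)$ sums, the completion of squares yielding $\mathcal{A},\mathcal{B},\bar{\mathcal{B}},\mathcal{D}$, and the four half-line Gaussian integrals that give the three-term brackets. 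This is exactly the paper's proof together with its Appendices.

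There is, however, one genuine gap: your elimination of the shared link. The paper never averages $\ln\gamma_a$; it observes in (\ref{eq_asymp_asc}) that $\gamma_a$ and $P$ cancel \emph{exactly} inside the instantaneous difference $\log_2(\alpha_{j,b}\gamma_a\gamma_{n^*})-\log_2(\alpha_{k,e}\gamma_a\gamma_e)$, so that every subsequent integral in (\ref{eq_ASY_I1})--(\ref{eq_ASY_I2}) involves only the noise-level constants $\tilde{\alpha}_{j,b}=\alpha_{j,b}/P$ and $\tilde{\alpha}_{k,e}=\alpha_{k,e}/P$, free of both $P$ and $m_a$. You instead keep $\ln\alpha$ (which contains $\ln P$), add $m_a$, and rely on the divergent pieces cancelling in the difference $I_{j,k,n^*}-I_{k,j,e}$. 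That cancellation does hold for the \emph{exact} integrals, because the common constant multiplies the same probability $\mbox{Pr}[\alpha_{j,b}\gamma_{n^*}>\alpha_{k,e}\gamma_e]$ on both sides — but your very next step destroys it. Once you set $F_{\gamma_e}(\cdot)\approx 1$ on the destination side only, the coefficient of the common constant becomes $1$ there, while on the eavesdropper side it remains $\mbox{Pr}[\alpha_{j,b}\gamma_{n^*}>\alpha_{k,e}\gamma_e]$; the uncancelled residue per $(j,k)$ term is $(\ln P+m_a)\,\mbox{Pr}[\alpha_{j,b}\gamma_{n^*}\le\alpha_{k,e}\gamma_e]/\ln 2$, which is strictly positive, grows like $\ln P$, and depends on $m_a$. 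Carried through, your derivation therefore does not yield the proposition's formulas (whose $\tilde{\alpha}$'s are $P$- and $m_a$-free) and contradicts the very saturation and shared-link independence you are trying to prove. The fix is simply to perform the exact $\gamma_a$/$P$ cancellation \emph{before} splitting into two separately approximated integrals, as the paper does. A secondary, more benign deviation: you justify $F_{\gamma_e}\approx 1$ by the maximum over $N$ destinations almost surely dominating $\gamma_e$, whereas the paper requires $m_b\gg m_e$ and $s_b\gg s_e$ and explicitly remarks that the dropped factor $Q\lb(-\lb(\phi_{n^*}t+\lambda_{j,k,n^*}\rb)\rb)$ does not depend on $N$; if you want the large-$N$ justification, it must be phrased as concentration of the order-statistic measure $N\lb(Q(-t)\rb)^{N-1}f_Q(t)$ on large $t$, not as pointwise domination.
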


\begin{proof}
By neglecting unity within $\log(\cdot)$ from $C_{j,n^*}$ and $C_{k,e}$ as $P\rightarrow \infty$, we can approximate (\ref{eq_instant_capacity}) as
\begin{align}
\label{eq_asymp_asc}
 &C_S\approx\sum_{j=1}^{2}\sum_{k=1}^{2}\delta_{j,{b}}\delta_{k,e}\max\lb\{\log_2\lb(\frac{\alpha_{j,b}\gamma_a\gamma_{n^*}}{\alpha_{k,e}\gamma_a\gamma_{e}}\rb),0\rb\}\nn\\
    &=\sum_{j=1}^{2}\sum_{k=1}^{2}\delta_{j,{b}}\delta_{k,e}\max\lb\{\lb(\log_2\lb(\tilde{\alpha}_{j,b}\gamma_{n^*}\rb)-\log_2\lb(\tilde{\alpha}_{k,e}\gamma_{e}\rb)\rb),0\rb\},
\end{align}
where
$\tilde{\alpha}_{1,b}=\frac{1}{\epsilon_{bW}^2}$, $\tilde{\alpha}_{2,b}=\frac{1}{\epsilon_{bW}^2(1+\eta_b)}$, 
$\tilde{\alpha}_{1,e}=\frac{1}{\epsilon_{eW}^2}$, and $\tilde{\alpha}_{2,e}=\frac{1}{\epsilon_{eW}^2(1+\eta_e)}$  are independent of $P$.
We observe that (\ref{eq_asymp_asc}) is independent of $\gamma_a$ which means that at high SNR, the correlation between destination-eavesdropper channels does not have any effect on the secrecy. The asymptotic ASC is derived following (\ref{e10}) as
\begin{align}
\label{eq_asym_ASC}
 \bar{C}_S=\sum_{j=1}^{2}\sum_{k=1}^{2}\delta_{j,{b}}\delta_{k,e}\lb(I_{j,k,n^*}-I_{k,j,e}\rb),
\end{align}
where $I_{j,k,n^*}$ and $I_{k,j,e}$ are modified from $I_{j,k,n^*}(x)$ and $I_{k,j,e}(x)$ following (\ref{eq_integral_i1i2}) and (\ref{e11}), respectively, and are now independent of $x$ (i.e., $\gamma_a$). Thus, $I_{j,k,n^*}$ and $I_{k,j,e}$ are expressed as 
\begin{align}
\label{eq_ASY_I1}
    I_{j,k,n^*}&=\int_{y=0}^{\infty}
    \log_2\left(\tilde{\alpha}_{j,{b}}  y\right)
    F_{\gamma_e}\lb(\frac{\tilde{\alpha}_{j,{b}}}{\tilde{\alpha}_{k,e}}y\rb)
    f_{\gamma_{n^*}}(y)
    \mbox{d}y,
    \end{align}
    and 
    \begin{align}
\label{eq_ASY_I2}
I_{k,j,e}&=\int_{z=0}^{\infty}\log_2\left(\tilde{\alpha}_{k,e} z\right)
    f_{\gamma_e}(z)
    \left(1-F_{\gamma_{n^*}}\lb(  \frac{\tilde{\alpha}_{k,e}}{\tilde{\alpha}_{j,{b}}}z     \rb)\right)
    \mbox{d}z,
\end{align}
respectively. Following manipulations similar to (\ref{e12a})-(\ref{e13}),
we can show that (\ref{e13}) is modified to
\begin{align}    
\label{eq_new_Iq}
&I_q
    =\int_{t=-\infty}^{\infty}
\Psi_q(t)f_Q(t)
    \mbox{d}t,
\end{align}
where $\Psi_{j,k,n^*}(t)$ and   $\Psi_{k,j,e}(t)$ correspond to $\Psi_{j,k,n^*}(x,t)$ and $\Psi_{k,j,e}(x,t)$ respectively from (\ref{e13a}) and (\ref{e13b}); however, these are now  independent of $x$. $\Psi_{j,k,n^*}(t)$ is obtained as 
\begin{align}    
\Psi_{j,k,n^*}(t)&=
\label{eH1}
\frac{N}{\ln{2}}\left(\ln\left(\tilde{\alpha}_{j,{b}}\right) +m_b+s_bt\right)\nn\\
&\times\lb(Q\left(-t\right)\rb)^{N-1} 
Q\left(-\left(\phi_{n^*}t+\lambda_{j,k,n^*}\right)\right),
\end{align}
where  
    $\phi_{n^*}=\frac{s_b}{s_e}$,
    and
    $\lambda_{j,k,n^*}=\frac{m_b-m_e+\ln(\tilde{\alpha}_{j,b}/\tilde{\alpha}_{k,e})}{s_e}$. The above equation is still difficult to handle to get a closed-form solution, as we note that this should be averaged over the Gaussian PDF in (\ref{eq_new_Iq}). Hence, we approximate (\ref{eH1}) further by neglecting the factor
    $Q\left(-\left(\phi_{n^*}t+\lambda_{j,k,n^*}\right)\right)$ when the quality of the destination channel is far better than that of the eavesdropper channel, i.e., when $s_b \gg s_e$ and $m_b \gg m_e$ as
\begin{align}  
\label{eH1_approx}
\Psi_{j,k,n^*}(t)&\approx\frac{N}{\ln{2}}\lb(\ln\left(\tilde{\alpha}_{j,{b}}\right)+m_b+s_bt\rb)\lb(Q\left(-t\right)\rb)^{N-1}. 
\end{align}
As $\phi_{n^*}$  and $\lambda_{j,k,n^*}$ have a high positive value when $s_b \gg s_e$ and $m_b \gg m_e$ for a given $t$, the factor $Q\left(-\left(\phi_{n^*}t+\lambda_{j,k,n^*}\right)\right)$ is close to unity, and thus $\Psi_{j,k,n^*}(t)$ is well-approximated by (\ref{eH1_approx}). We also notice that the factor $Q\left(-\left(\phi_{n^*}t+\lambda_{j,k,n^*}\right)\right)$ does not depend on $N$. Hence, the approximation cannot be improved by varying $N$.

Subsequently, $\Psi_{k,j,e}(t)$ is written as
\begin{align} 
    \label{eH2}
 & \Psi_{k,j,e}(t)=  
      \frac{1}{\ln{2}}
    \left(\left(
    \ln\left(\tilde{\alpha}_{k,e} \right) + m_e+s_e t\right)\nn
    \rb.\\
   &\lb.
   \times \left(1-\lb(Q\left(-\left(\phi_{e}t+\lambda_{k,j,e}\right)\right)\rb)^N\right)\right)\nn\\    
&=   \frac{1}{\ln{2}}
       \left[
       \lb(\ln\left(\tilde{\alpha}_{k,e} \right)+m_e+ s_e t\rb)
       -\lb(\ln\left(\tilde{\alpha}_{k,e} \right)+m_e+s_et\rb)
       \rb.\nn\\
       &\lb.  \times
       \lb(Q\left(-\left(\phi_{e}t+\lambda_{k,j,e}\right)\right)\rb)^N
       \right],
\end{align}
where
     $\phi_{e}=\frac{s_e}{s_b}$,
    and
    $\lambda_{k,j,e}=\frac{m_e-m_b+\ln(\tilde{\alpha}_{k,e}/\tilde{\alpha}_{j,b})}{s_b}$.
    
    The solutions of $I_{j,k,n^*}$ and $I_{k,j,e}$ following (\ref{eq_new_Iq}) with the help of  (\ref{eH1_approx}) and (\ref{eH2}), respectively, are written as
    \begin{align}    
    \label{eq_I1_psi_approx}
    I_{j,k,n^*}&\approx\int_{t=-\infty}^{\infty}
  {\Psi}_{j,k,n^*}(t)f_Q(t)\mbox{d}t,\\
    \label{eq_I2_psi}
    I_{k,j,e}&=\int_{t=-\infty}^{\infty}
    \Psi_{k,j,e}(t)f_Q(t)\mbox{d}t.
    \end{align}
    The solutions of $I_{j,k,n^*}$ and $I_{k,j,e}$ are carried out in Appendix \ref{appendix_sol_I1} and  Appendix \ref{appendix_sol_I2}, respectively. By substituting the solutions of $I_{j,k,n^*}$ and $I_{k,j,e}$ from (\ref{eq_final_I1jkn}) and  (\ref{eq_final_I2jkn}), respectively, into the asymptotic  ASC definition in (\ref{eq_asym_ASC}), we obtain the final result in (\ref{eq_asym_ASC_propos}).
\end{proof}

\begin{remark}\label{remark_jkn_positive}When $N$ is large, $\mathcal{B}_{n^*}^{(N-1)}$ in (\ref{eq_constants_jkn_n}) becomes large and hence, $Q\lb(\mathcal{B}_{n^*}^{(N-1)}\rb)$ and $\exp{\lb(-\frac{(\mathcal{B}_{n^*}^{(N-1)})^2}{2}\rb)}$ in (\ref{eq_final_I1jkn}) become negligible. Thus, (\ref{eq_final_I1jkn}) tends to zero.
\end{remark}

\begin{remark}\label{remark_jkn_negative}As $n$ increases,  $\bar{\mathcal{B}}_{n^*}^{(n)}$ defined in (\ref{eq_constants_jkn_n}) tends to a large negative number and hence, $Q\lb(\bar{\mathcal{B}}_{n^*}^{(n)}\rb)$ tends to unity and   $\exp{\lb(-\frac{(\bar{\mathcal{B}}_{n^*}^{(n)})^2}{2}\rb)}$ tends to zero. Thus, only lower order terms of $n$ in the summation expressed in (\ref{eq_final_I2jkn}) dominate. In particular, $n=0$ is the dominant term in this equation. This suggests that  (\ref{eq_final_I2jkn}) is non-zero even as $N$ increases.
\end{remark}


\begin{remark}\label{remark_kje_positive}Following the same logic as in  Remark \ref{remark_jkn_positive}, $I^{(+)}_{{k,j,e}}$ given by (\ref{eq_Ikje_plus}) tends to zero.
\end{remark}

\begin{remark}\label{remark_kje_negative}Again, following Remark \ref{remark_jkn_negative}, we can show that as $N$ increases, the higher order summation terms (large $n$) in (\ref{eq_Ikje_minus}) tend to zero. In this case, only lower order summation terms (small $n$) remain. Now, if $m_b \gg m_e$, it is evident from (\ref{eq_I22_constants}) that $\bar{\mathcal{B}}_{k,j,e}^{(n)}$ tends to a large negative number as $\lambda_{k,j,e}=\frac{m_e-m_b+\ln(\alpha_{k,e}/\alpha_{j,b})}{s_b}$ tends to a large negative number. Hence, all lower order terms of $n$ in (\ref{eq_Ikje_minus}) also tend to have low values. Therefore, we can conclude that when $m_b \gg m_e$, $I^{(-)}_{{k,j,e}}$ tends to zero as $N$ increases. 
\end{remark}


\begin{corollary}
The approximate asymptotic ASC of the destination scheduling scheme when $N\rightarrow \infty$ and $m_b \gg m_e$ is 

\begin{align}
\label{eq_asym_ASC_coro}
 \bar{C}_S\approx\sum_{j=1}^{2}\sum_{k=1}^{2}\delta_{j,{b}}\delta_{k,e}\lb( I_{j,k,n^*}^{(-)}-I^{(0)}_{{k,j,e}}\rb),
\end{align}
where $I_{j,k,n^*}^{(-)}$ and  $I^{(0)}_{{k,j,e}}$ are defined in  (\ref{eq_final_I2jkn}) and (\ref{eq_Ikjezero}), respectively.
 \end{corollary}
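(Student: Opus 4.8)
The plan is to derive the corollary as a direct specialization of Proposition~\ref{proposition}, taking the two regimes $N\to\infty$ and $m_b\gg m_e$ inside the five contributions appearing in (\ref{eq_asym_ASC_propos}). Since that expression already decomposes the asymptotic ASC as a sum over $j,k$ of $I_{j,k,n^*}^{(+)}$, $I_{j,k,n^*}^{(-)}$, $I^{(0)}_{k,j,e}$, $I^{(+)}_{k,j,e}$, and $I^{(-)}_{k,j,e}$, it suffices to identify which of these survive in the stated regime; the corollary then follows by assembling Remarks~\ref{remark_jkn_positive}--\ref{remark_kje_negative}.

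First I would dispatch the two ``positive'' contributions using the common mechanism of Remarks~\ref{remark_jkn_positive} and~\ref{remark_kje_positive}: as $N\to\infty$, the arguments $\mathcal{B}_{n^*}^{(N-1)}$ in (\ref{eq_constants_jkn_n}) and $\mathcal{B}_{k,j,e}^{(N)}$ in (\ref{eq_I22_constants}) grow without bound, so that both the Gaussian tail $Q(\cdot)$ and the factor $\exp(-(\cdot)^2/2)$ decay to zero, forcing $I_{j,k,n^*}^{(+)}\to 0$ and $I^{(+)}_{k,j,e}\to 0$. For the eavesdropper's negative contribution I would invoke Remark~\ref{remark_kje_negative}: the high-order summands vanish because $\bar{\mathcal{B}}_{k,j,e}^{(n)}$ becomes large and negative, and the extra hypothesis $m_b\gg m_e$ makes $\lambda_{k,j,e}=(m_e-m_b+\ln(\tilde\alpha_{k,e}/\tilde\alpha_{j,b}))/s_b$ strongly negative, so even the surviving low-order summands collapse and $I^{(-)}_{k,j,e}\to 0$. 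This is the only place where the second hypothesis $m_b\gg m_e$ is actually needed.

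What remains is to confirm that precisely $I_{j,k,n^*}^{(-)}$ and $I^{(0)}_{k,j,e}$ persist. The term $I^{(0)}_{k,j,e}$ in (\ref{eq_Ikjezero}) is trivially retained, being independent of both $N$ and the ratio $m_b/m_e$. For $I_{j,k,n^*}^{(-)}$ I would appeal to Remark~\ref{remark_jkn_negative}: while the high-order terms of the sum in (\ref{eq_final_I2jkn}) vanish as $\bar{\mathcal{B}}_{n^*}^{(n)}\to-\infty$, the $n=0$ summand does not, since there $\bar{\mathcal{B}}_{n^*}^{(0)}=0$, $\mathcal{A}_{n^*}^{(0)}=1$, and $\mathcal{D}_{n^*}^{(0)}=1$, leaving a nonvanishing $n=0$ contribution. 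Dropping the three vanishing terms from (\ref{eq_asym_ASC_propos}) then collapses it exactly to the claimed form (\ref{eq_asym_ASC_coro}).

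I expect the genuine obstacle to be not the algebra but the care required to take these limits termwise inside the finite binomial sums, where the Gaussian-tail and $\exp(-\mathcal{B}^2/2)$ decays must be shown to dominate the polynomial-in-$N$ growth of the binomial weights $\binom{N-1}{n}$, $\binom{N}{n}$ together with the factors $\mathcal{D}_{n^*}^{(n)}$ and $\bar{\mathcal{D}}_{k,j,e}^{(n)}$; only then is the claimed vanishing genuine rather than an artifact of inspecting individual summands. Since this competition is exactly what Remarks~\ref{remark_jkn_positive}--\ref{remark_kje_negative} already argue, the corollary reduces to collecting those four observations and substituting into Proposition~\ref{proposition}.
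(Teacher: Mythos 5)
Your proposal is correct and follows essentially the same route as the paper, whose proof consists precisely of assembling Remarks \ref{remark_jkn_positive}--\ref{remark_kje_negative} to drop $I_{j,k,n^*}^{(+)}$, $I^{(+)}_{k,j,e}$, and $I^{(-)}_{k,j,e}$ from Proposition \ref{proposition} while retaining $I_{j,k,n^*}^{(-)}$ and $I^{(0)}_{k,j,e}$. Your closing caveat about justifying the termwise limits against the growth of the binomial weights is a fair observation on the rigor of those remarks, but it does not change the argument, which matches the paper's.
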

   
 \begin{proof}
 The proof easily follows from Remarks \ref{remark_jkn_positive}-\ref{remark_kje_negative}.
 \end{proof}
    
We can easily conclude from (\ref{eq_asym_ASC_coro}) how the channel parameters affect the asymptotic ASC performance. We can infer this from the case when $N=1$. When $N=1$, (\ref{eq_final_I2jkn}) becomes
\begin{align}
\label{eq_n0}
I_{j,k,n^*}^{(-)}&=\frac{1}{2\ln{2}}{\lb(\ln\left(\tilde{\alpha}_{j,{b}}\right)+m_b\rb)}.
\end{align}
In addition, the second term
\begin{align}
\label{eq_kje_n0}
I_{k,j,e}^{(0)}&=\frac{1}{\ln{2}}\lb(\ln\left(\tilde{\alpha}_{k,e} \right)+m_e\rb),
\end{align}
is independent of $N$. Note that (\ref{eq_n0}) and (\ref{eq_kje_n0}) clearly show that the performance improves as the difference of $m_b$, $m_e$, and the ratio of $\tilde{\alpha}_{j,{b}}$, $\tilde{\alpha}_{k,e}$ increases. If the impulsive noise arrival rates are low, i.e. $p_b$ and $p_e$ are low, the ratio of $\tilde{\alpha}_{j,{b}}$, $\tilde{\alpha}_{k,e}$ does not have much effect, instead the difference between $m_b$ and $m_e$ becomes the critical design parameter for secrecy rather than the impulsive noise parameters $\tilde{\alpha}_{j,{b}}$ and $\tilde{\alpha}_{k,e}$.  In general, the dominant term in $I_{j,k,n^*}^{(-)}$ is for $n=0$ and other terms diminish as $n$ increases. 
Additionally, as $N$ increases, $I_{j,k,n^*}^{(-)}$ increases, hence, the secrecy performance improves with $N$.


\subsection{Probability of Intercept (POI)}
\label{POI}
In this section, we find the approximate closed-form POI using the definition provided in (\ref{e6}) with the help of the CDF of $\gamma_{n^*}$ from (\ref{e8}) and the PDF of $\gamma_{e}$ from (\ref{e1}) as
\begin{align}
   & {\cal P}_I
    =\sum_{j=1}^{2}
    \sum_{k=1}^{2}
    \delta_{j,{b}}\delta_{k,e}
    \mbox{Pr}\lb[\gamma_{n^*}\leq\frac{\alpha_{k,e}}{\alpha_{j,{b}}}\gamma_e\rb]
     \nn\\
    &=\sum_{j=1}^{2}
    \sum_{k=1}^{2}
    \delta_{j,{b}}\delta_{k,e}
    \int_{0}^{\infty}\left(1-Q\left(\frac{\ln\lb(\frac{\alpha_{k,e}}{\alpha_{j,{b}}}x\rb)-m_b}{s_b}\right)\right)^N
    \nn\\
    &\times
    \frac{1}{x\sqrt{2\pi s_e^2}}\exp\left(-\frac{1}{2}\left(\frac{\ln(x)-m_e}{s_e}\right)^2\right)\mbox{d}x.
\end{align}
As $\frac{\alpha_{k,e}}{\alpha_{j,{b}}}=\frac{\tilde{\alpha}_{k,e}}{\tilde{\alpha}_{j,{b}}}$ is independent of $P$, the POI is independent of $P$.
{Using the change of variable $t=(
{\ln{x} - m_e})/{s_e}$, this can be simplified as}
\begin{align}
{\cal P}_I   & =
   \sum_{j=1}^{2}
    \sum_{k=1}^{2}
    \delta_{j,{b}}\delta_{k,e}
    \int_{t=-\infty}^{\infty}
    \left(1-Q\left(\phi_{e} t+\lambda_{k,j,e}\right)\right)^Nf_Q(t)\mbox{d}t\nn\\
   &= \sum_{j=1}^{2}
    \sum_{k=1}^{2}
    \delta_{j,{b}}\delta_{k,e}
    \int_{t=-\infty}^{\infty}
    \left(Q\left(-\left(\phi_{e}t+\lambda_{k,j,e}\right)\right)\right)^N f_Q(t)\mbox{d}t.
    \label{e19}
\end{align}

    We notice that integration in (\ref{e19}) is in the same form as (\ref{e13}), where a function is averaged over the standard normal PDF; hence, its numerical solution can be achieved using the Gauss-Hermite quadrature rule as in (\ref{e16}). However, we will also provide an approximate closed-form solution following a similar mathematical approach to that we used to analyze the approximate asymptotic ASC in Subsection \ref{sec_asym_asc}.   
We use the approximate $Q$-function given by  (\ref{eq_qapprox}). We observe that the integration is in the same form as the first integration evaluated in (\ref{eq_I2_kje_app_B}) in Appendix \ref{appendix_sol_I2}. Hence, we use the results already derived therein. Following (\ref{eq_new_Ikje})-(\ref{eq_I22_constants_APNDX}) and the integral solutions provided in Appendix \ref{lemmas}, we get the approximate closed-form POI as 
\begin{align}
\label{eq_poi_final}
  {\cal P}_I & =
   \sum_{j=1}^{2}
    \sum_{k=1}^{2}
    \delta_{j,{b}}\delta_{k,e}\frac{1}{\phi_{e}}\lb(\int_{-\infty}^{0} \frac{\mathcal{D}_{k,j,e}^{(N)}}{\sqrt{2\pi}} 
    \rb.\nn\\
    &\lb.
    \times\exp{\lb(-\frac{1}{2}\lb(\mathcal{A}_{e}^{(N)}  u-\mathcal{B}_{k,j,e}^{(N)} \rb)^2\rb)}\mbox{d}u\rb.\nn\\
    &\lb.+ \sum_{n=0}^{N}\binom{N}{n}(-1)^n\int_{0}^{\infty} \frac{\bar{\mathcal{D}}_{k,j,e}^{(n)}  }{\sqrt{2\pi}}\rb.\nn\\
    &\lb. \times\exp{\lb(-\frac{1}{2}\lb(\mathcal{A}_{e}^{(n)}   u-\bar{\mathcal{B}}_{k,j,e}^{(n)}  \rb)^2\rb)}\rb)\mbox{d}u\nn\\
    &=
   \sum_{j=1}^{2}
    \sum_{k=1}^{2}
    \frac{\delta_{j,{b}}\delta_{k,e}}{\phi_{e}}\lb(\mathcal{D}_{k,j,e}^{(N)}\frac{Q\lb(\mathcal{B}_{k,j,e}^{(N)}\rb)}{\mathcal{A}_{e}^{(N)} }
    \rb.\nn\\
    &\lb.
    + \sum_{n=0}^{N}\binom{N}{n}(-1)^n\bar{\mathcal{D}}_{k,j,e}^{(n)}\frac{1-Q\lb(\bar{\mathcal{B}}_{k,j,e}^{(n)}\rb)}{\mathcal{A}_{e}^{(n)} }\rb),
\end{align}
where $\mathcal{D}_{k,j,e}^{(n)}$,  $\mathcal{B}_{k,j,e}^{(n)}$, $\mathcal{A}_{e}^{(n)}$,  $\bar{\mathcal{D}}_{k,j,e}^{(n)}$, $\bar{\mathcal{B}}_{k,j,e}^{(n)}$, and $\mathcal{A}_{e}^{(n)}$  are defined in (\ref{eq_I22_constants}).
 
\begin{remark}Following similar logic as in Remarks \ref{remark_kje_positive} and \ref{remark_kje_negative}, it is easy to show that as $N\rightarrow\infty$, $Q\lb(\mathcal{B}_{k,j,e}^{(N)}\rb)$ and $1-Q\lb(\bar{\mathcal{B}}_{k,j,e}^{(n)}\rb)$ both tend to zero and thus, the POI tends to zero.
\end{remark}

\section{Results and Discussion}
\label{sec_results_and_discussions}
In this section, we present the numerical results for the ASC and its approximate asymptotic limit along with the numerical results of POI. 
The parameters $s_\rho$, $m_\rho$, $\forall  \rho\in\{a,b,e\}$, are assumed to have the same values as in \cite{Ga:11, GuCeAr:11}. The channel parameters, $s_\rho$ and $m_\rho$, in general depend on the power distribution network and 
high values of these indicate high fluctuations in the received signal power. 
The computable expression of the ASC and its closed-form approximate asymptotic limit derived in this paper are plotted against the transmit power $P$ in Figs. \ref{fig2a_new}-\ref{fig6_new}. The computable expression of the POI and its closed-form approximate expression are plotted with $N$ in Fig \ref{fig7_new}. 
The horizontal straight lines in ASC plots depict the closed-form approximate asymptotic expression derived in the paper. This is indicated as ``Approx. asymptote'' in the figures. The marker ``$\times$'' with the same color as the corresponding analytical curve indicates numerical results. We assume $K_1=0.3842$, $K_2=0.7640$, and $K_3=0.6964$ in the $Q$-function approximation, as was adopted in \cite{Q_func_Approximation}.
\begin{figure}
\begin{center}
\includegraphics[width=3in]{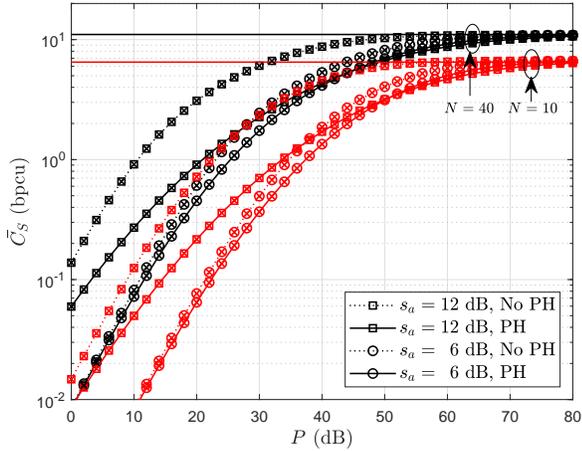}
\vspace{-.2cm}
\caption{Average secrecy capacity versus $P$ for varying values of $s_a$ and $N$. Here $s_b=s_e=6$ dB, $m_a=m_b=-20$ dB, $m_e=-40$ dB, $p_b=p_e=10^{-1}$, and $\eta_b=\eta_e=10$.}
\vspace{-.5cm}
\label{fig2a_new}
\end{center}
\end{figure}

Fig. \ref{fig2a_new} compares the ASC performance of the PH-based system, denoted as ``PH'', with the system having only direct source to $N$ destination links without a PH node, denoted as ``No PH'', to show the performance degradation of the system which is caused by the PH. We have assumed that the average SNR per link for the PH and No PH systems are the same for a fair comparison. The analytical performance of the No PH system is evaluated using our proposed analysis just by adopting $\gamma_a=1$ and performing some trivial changes for the same average SNR per link in both systems. In particular, the effect of the log-normal variance $s_a$ of the shared link in the PH system is shown for a given set of parameters as presented in the figure caption. The performance varying $N$ is also plotted. The legend only shows the markers for $N=40$ in the colour black, however, the same markers in the colour red are also adopted for $N=10$. Markers in the colour red are omitted in the legend to avoid overcrowding of markers. 

We observe that the performance of the No PH system is always better than the PH system. The existence of the PH correlates the signal at the destination and the eavesdropper and thus the secrecy rate decreases.
Careful observation also reveals that as $s_a$ increases, the performance degradation increases in the PH system due to the increased correlation. An increase in $s_a$ increases the correlation between end-to-end links. In contrast, as $N$ does not affect the correlation, the performance degradation in the PH system is independent of $N$. This can be confirmed from the figure as the performance gap between the PH and No PH systems remains the same as $N$ increases from 10 to 40.

We further observe that the ASC performance asymptotically saturates to a constant value. This happens as the ratio of the destination and the eavesdropper channel SNRs becomes independent of $P$ as $P$ increases. The asymptotic performance derived in the paper also matches well with the saturation value, thus verifying the correctness of our approximate asymptotic analysis. In this particular case, the destination channel quality is far better than the eavesdropping channel quality ($m_b \gg m_e$). We notice that as $P$ increases, the asymptotic behaviour of the No PH and PH systems becomes the same. This is reasonable, as we observed in the asymptotic analysis that the ASC performance is independent  of $\gamma_a$ when $P\rightarrow\infty$.  This suggests that the asymptotic performance is unaffected by variations in $s_a$ and $m_a$, which is also confirmed from the figure. However, different saturation levels occur with different $N$. 
As an increase in $N$ increases the number of choices for selecting the best destination node from, the saturation level improves with it. 

\begin{figure}
\begin{center}
\includegraphics[width=3in]{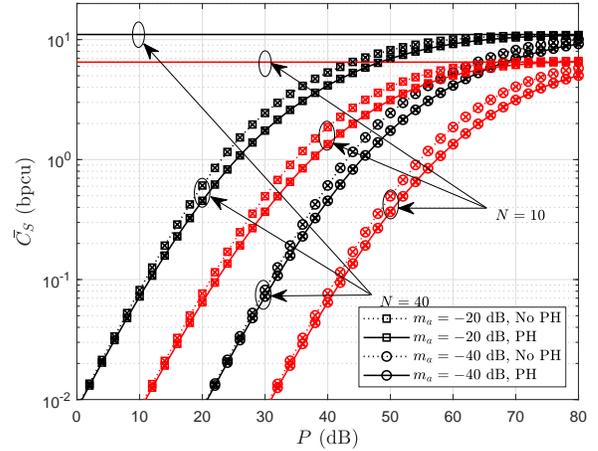 
}
\vspace{-.2cm}
\caption{Average secrecy capacity versus $P$ for varying values of $m_a$ and $N$. Here $s_a=s_b=s_e=6$ dB, $m_b=-20$ dB, $m_e=-40$ dB, $p_b=p_e=10^{-1}$, and $\eta_b=\eta_e=10$.}
\vspace{-.5cm}
\label{fig2b_new}
\end{center}
\end{figure}

Fig. \ref{fig2b_new} compares the performance of the PH and No PH systems when the log-normal mean $m_a$ of the shared link improves.  Markers in the colour red are omitted here as well in the legend. We notice that as $m_a$ increases, the performance degradation due to the PH exhibits a similar trend to that due to $s_a$ in Fig.  \ref{fig2a_new}. The reasoning is the same as for $s_a$ in Fig. \ref{fig2a_new}. However, the performance degradation due to increasing $m_a$ is not much as large as in Fig. \ref{fig2a_new}. Moreover, the approximate asymptote is unaffected by $m_a$ as the ASC performance is independent of $\gamma_a$ at high $P$. The performance improvement due to an increase in $N$ can also be seen in this figure. Furthermore, Fig. \ref{fig2a_new} and Fig.  \ref{fig2b_new} reveal that an improvement in the shared channel quality ($s_a$ and $m_a$) improves the ASC performance.  This is counterintuitive, as one might expect that the shared link between the eavesdropping and destination channel would not have any effect on the secrecy.

\begin{figure}
\begin{center}
\includegraphics[width=3in]{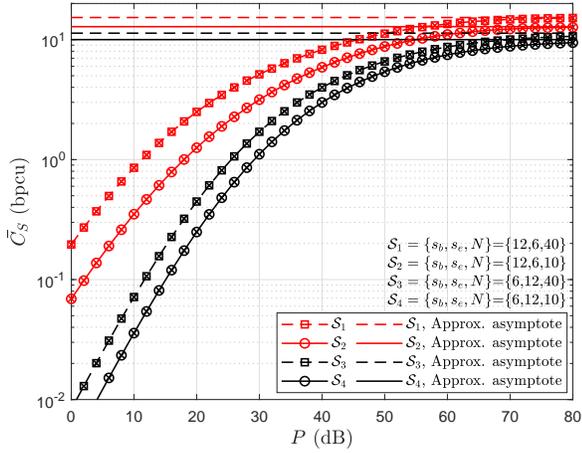}
\vspace{-.2cm}
\caption{Average secrecy capacity versus $P$ for varying values of $s_b$, $s_e$, and $N$. Here $s_a=6$ dB, $m_a=m_b=-20$ dB, $m_e=-40$ dB, $p_b=p_e=10^{-1}$, and $\eta_b=\eta_e=10$.}
\vspace{-.5cm}
\label{fig3a_new}
\end{center}
\end{figure}

Fig. \ref{fig3a_new} shows the effect of $s_b$ and $s_e$ on the ASC performance for a given set of parameters as depicted in the figure caption. Different channel quality variations are shown, for example, when $s_b>s_e$ and $s_b<s_e$ while $m_b \gg m_e$. The approximate asymptote accurately predicts the saturated ASC in these situations even though the destination and eavesdropper channel qualities vary significantly, which verifies the correctness of our approximation. We also notice that as $s_b$ improves, the performance improves, however, as $s_e$ improves, the performance degrades. This is intuitively plausible since the improvement in the destination channel quality improves the secrecy and the improvement in the eavesdropping channel quality degrades the secrecy.

\begin{figure}
\begin{center}
\includegraphics[width=3in]{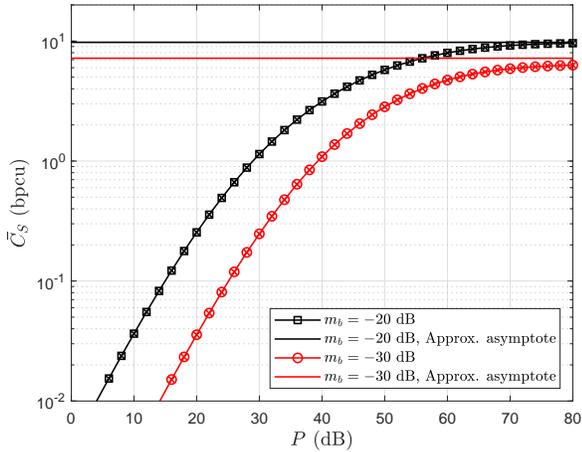 }
\vspace{-.2cm}
\caption{Average secrecy capacity versus $P$ for varying values of $m_b$. Here $N=10$, $s_a=s_b=s_e=6$ dB, $m_a=-20$ dB, $m_e=-40$ dB, $p_b=p_e=10^{-1}$, and $\eta_b=\eta_e=10$.}
\vspace{-.5cm}
\label{fig3b_new}
\end{center}
\end{figure}

Fig. \ref{fig3b_new} examines how the accuracy of the proposed approximate asymptotic ASC varies with the difference between $m_b$ and $m_e$. We find that when $m_b=-20$ dB while $m_e=-40$ dB, the approximate asymptote is very accurate; however, there is a performance gap between the approximate asymptotic and the actual ASC when $m_b=-30$ dB while $m_e=-40$ dB. This gap widens as the difference between $m_b$ and $m_e$ reduces.

\begin{figure}
\begin{center}
\includegraphics[width=3in]{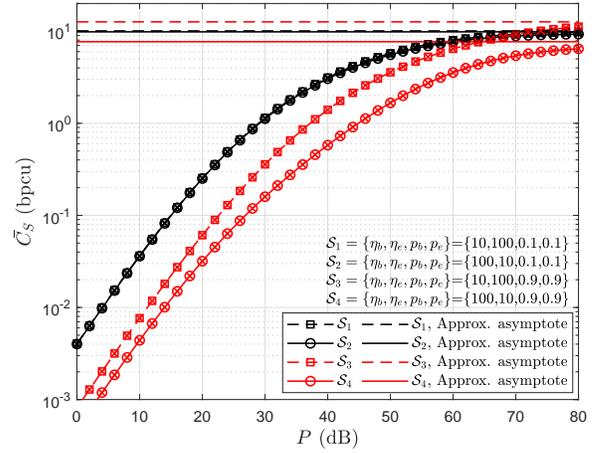
}
\vspace{-.2cm}
\caption{Average secrecy capacity versus $P$ for varying values of $\eta_b,\eta_e,p_b$, and $p_e$. Here $N=10$, $s_a=s_b=s_e=6$ dB, $m_a=m_b=-20$ dB, and $m_e=-40$ dB.}
\vspace{-.5cm}
\label{fig6_new}
\end{center}
\end{figure}

Fig. \ref{fig6_new} shows the effect of  the independent noise assumption at the destinations and the eavesdropper on the ASC performance. An independent impulsive noise arrival process at the destination and eavesdropper with probabilities $p_b$ and $p_e$, respectively,  are considered assuming each taking values $0.1$ and $0.9$. In addition, different strengths of impulsive noise at these nodes, i.e., $\eta_b$ and $\eta_e$ each taking values 10 and 100 are assumed. 
Two extreme values of $\eta_\zeta$ and $p_\zeta$, where $\zeta\in\{b,e\}$, are considered to infer the performance of other possible conditions in between these limits.
When both nodes have low impulsive noise arrival rates (black curves), whether $\eta_e>\eta_b$ or $\eta_e<\eta_b$ does not make any difference to the performance. However, when both nodes have a high probability of impulsive noise arrival (red curves), the case where $\eta_e>\eta_b$ has a far better performance compared to the case where $\eta_e<\eta_b$. This suggests that by artificially controlling the power ratios of the impulsive noise to the background noise independently at the destinations and at the eavesdropper, the secrecy performance can be significantly improved. 

\begin{figure}
\begin{center}
\includegraphics[width=3in]{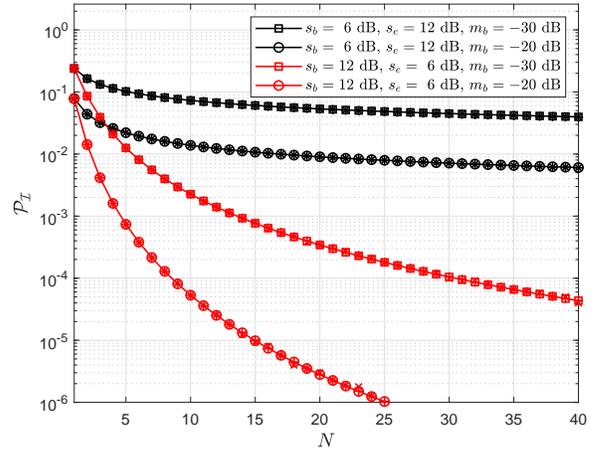 
}
\vspace{-.2cm}
\caption{POI versus $N$ for varying values of $s_b,s_e$, and $m_b$. Here $m_e=-40$ dB, $p_b=p_e=10^{-1}$, and $\eta_b=\eta_e=10$.}
\vspace{-.5cm}
\label{fig7_new}
\end{center}
\end{figure}

Fig. \ref{fig7_new} compares the POI performance for different values of $s_b$,  $m_b$, and $s_e$ as a function of $N$ (note that the POI does not depend on $P$). The approximate closed-form solution along with the computable solution using the Gauss-Hermite method derived in the paper is shown with the corresponding numerical results. The Gauss-Hermite computable results are plotted with the marker ``+''. All of these plots merge with each other, validating the correctness of our analysis. As $N$ increases, the performance improves due to the increased number of choices of selection.  Further, as $m_b$  improves for a given $s_b$ and $s_e$ combination, the POI improves.

\section{Conclusion}
\label{sec_conclusion}
In this paper, for the first time, an optimal destination scheduling mechanism is proposed to improve the PLS of a pinhole-based PLC network. Computable expressions for both the ASC and POI performance across many different networks are derived along with a closed-form approximate asymptotic ASC and an approximate POI. We conclude that the shared link in the pinhole-based system adversely affects the average secrecy rate compared to the system without a pinhole, however, it has no effect on the probability of intercept. We observe that the degradation in ASC increases as the shared link log-normal mean and variance increase. From the asymptotic analysis, we find that the ASC asymptotically saturates to a constant value as the transmit power increases, while the POI can be decreased by increasing the number of destination nodes. We also show that by controlling the artificial impulsive noise power and its arrival rate, the secrecy rate can be improved.


\appendix

\subsection{Solution of $I_{j,k,n^*}$ given by (\ref{eq_I1_psi_approx}).}
\label{appendix_sol_I1} 
We will find an approximate closed-form solution of (\ref{eq_I1_psi_approx}) with the help the following approximation to the $Q$-function  \cite{Q_func_Approximation} \begin{align} 
\label{eq_qapprox}
Q(t)=\exp\lb(-\lb(K_1t^2+K_2t+K_3\rb)\rb)~~\mbox{where}~~t\ge0,
  \end{align} 
where $K_1, K_2, K_3 \in \mathbb{R}$ are fitting parameters \cite{Q_func_Approximation}. The approximation is valid for $t\ge0$, hence we divide the integration region in (\ref{eq_I1_psi_approx})  
as
    \begin{align}    
    \label{eq_I1_psi_apendix}
    I_{j,k,n^*}&=I_{j,k,n^*}^{(+)}+I_{j,k,n^*}^{(-)},
    \end{align} 
    where
    \begin{align}    
    \label{eq_I11_apendix1}
    I_{j,k,n^*}^{(+)}&=\int_{-\infty}^{0}
    \tilde{\Psi}_{j,k,n^*}(t)f_Q(t)\mbox{d}t,\\
    \label{eq_I11_apendix}
    I_{j,k,n^*}^{(-)}&=\int_{0}^{\infty}
    \tilde{\Psi}_{j,k,n^*}(t)f_Q(t)\mbox{d}t.
    \end{align}
In (\ref{eH1_approx}), $Q$-function in $\tilde{\Psi}_{j,k,n^*}(t)$ has a positive argument when $-\infty\le t<0$; hence, we rewrite $I_{j,k,n^*}^{(+)}$ by applying the approximate $Q$-function as
\begin{align}    
    \label{eq_I1_integrate1}
    & I_{j,k,n^*}^{(+)}=\int_{-\infty}^{0}\frac{N}{\ln{2}}\lb(\ln\left(\tilde{\alpha}_{j,{b}}\right)+m_b+s_bt\rb)\nn\\
    & \times\exp\lb(-(N-1)\lb(K_1t^2-K_2t+K_3\rb)\rb)f_Q(t)\mbox{d}t.
     \end{align}
    After substituting for $f_Q(t)$ and performing some further manipulations we obtain
\begin{align} 
\label{eq_Ijn_finalappendix}
&I_{j,k,n^*}^{(+)}=\int_{-\infty}^{0}\frac{N}{\ln{2}}\lb(\ln\left(\tilde{\alpha}_{j,{b}}\right)+m_b+s_bt\rb)\nn\\
    &
    \times\frac{\mathcal{D}_{n^*}^{(N-1)}}{\sqrt{2\pi}}
\exp\lb(-\frac{1}{2}\lb(\mathcal{A}_{n^*}^{(N-1)}t-\mathcal{B}_{n^*}^{(N-1)}\rb)^2\rb)\mbox{d}t,
\end{align}
where
\begin{align} 
\label{eq_constants_jkn_N_APNDX}
&\mathcal{A}_{n^*}^{(N-1)}=\sqrt{2(N-1)K_1+1},~~~
\mathcal{B}_{n^*}^{(N-1)}=\frac{(N-1)K_2}{\mathcal{A}_{n^*}^{(N-1)}},~~~\nn\\
    &
\mathcal{C}_{n^*}^{(N-1)}=2(N-1)K_3,
\nn\\&
\mathcal{D}_{n^*}^{(N-1)}=\exp\lb(-\frac{1}{2}\lb(\mathcal{C}_{n^*}^{(N-1)}-(\mathcal{B}_{n^*}^{(N-1)})^2\rb)\rb).
\end{align}
The solution of (\ref{eq_Ijn_finalappendix}) is obtained with the help of (\ref{eq_app_I_post}) and (\ref{eq_app_I_post_t}) in Appendix \ref{lemmas} as
\begin{align}
\label{eq_final_I1jkn_APNDX}
&I_{j,k,n^*}^{(+)}=\frac{N\mathcal{D}_{n^*}^{(N-1)}}{\ln{2}}\Bigg[\frac{\lb(\ln\left(\tilde{\alpha}_{j,{b}}\right)+m_b\rb)Q\lb(\mathcal{B}_{n^*}^{(N-1)}\rb)}{\mathcal{A}_{n^*}^{(N-1)}}
\Bigg.\nn\\&\Bigg.
-\frac{s_b\exp{\lb(-\frac{(\mathcal{B}_{n^*}^{(N-1)})^2}{2}\rb)}}{(\mathcal{A}_{n^*}^{(N-1)})^2\sqrt{2\pi}}
+\frac{s_b\mathcal{B}_{n^*}^{(N-1)}Q\lb(\mathcal{B}_{n^*}^{(N-1)}\rb)}{(\mathcal{A}_{n^*}^{(N-1)})^2}\Bigg].
\end{align}

Similarly, using approximation of the $Q$-function in (\ref{eq_I11_apendix}), substituting for $f_Q(t)$, and performing some further manipulations, we can write
$I_{j,k,n^*}^{(-)}$ as
\begin{align}    
\label{eq_I1_integrate}
&I_{j,k,n^*}^{(-)}=\int_{0}^{\infty}\frac{N}{\ln{2}}\lb(\ln\left(\tilde{\alpha}_{j,{b}}\right)+m_b+s_bt\rb)
\nn\\
&
\times\lb(1-\exp\lb(-\lb(K_1t^2+K_2t+K_3\rb)\rb)\rb)^{(N-1)}f_Q(t)\mbox{d}t\nn\\
&=\int_{0}^{\infty}\frac{N}{\ln{2}}\lb(\ln\left(\tilde{\alpha}_{j,{b}}\right)+m_b+s_bt\rb)
\sum_{n=0}^{N-1}\binom{N-1}{n}(-1)^n
\nn\\
&
\times\frac{\mathcal{D}_{n^*}^{(n)}}{\sqrt{2\pi}}\exp\lb(-\frac{1}{2}\lb(\mathcal{A}_{n^*}^{(n)}t-\bar{\mathcal{B}}_{n^*}^{(n)}\rb)^2\rb)\mbox{d}t,
\end{align}
where
\begin{align}   
\label{eq_constants_jkn_n_APNDX}
\mathcal{A}_{n^*}^{(n)}&=\sqrt{2nK_1+1},~~~
\bar{\mathcal{B}}_{n^*}^{(n)}=-\frac{nK_2}{\mathcal{A}_{n^*}^{(n)}},
~~~
\mathcal{C}_{n^*}^{(n)}=2nK_3,~~~\nn\\
\mathcal{D}_{n^*}^{(n)}&=\exp\lb(-\frac{1}{2}\lb(\mathcal{C}_{n^*}^{(n)}-(\bar{\mathcal{B}}_{n^*}^{(n)})^2\rb)\rb).
\end{align}
The solution of (\ref{eq_I1_integrate}) is obtained with the help of (\ref{eq_app_I_neg}) and (\ref{eq_app_I_neg_t}) in Appendix \ref{lemmas} as
\begin{align}
\label{eq_final_I2jkn_APNDX}
&I_{j,k,n^*}^{(-)}=\frac{N}{\ln{2}}\sum_{n=0}^{N-1}\binom{N-1}{n}(-1)^n\mathcal{D}_{n^*}^{(n)}
\nn\\
&\times\Bigg[\lb(\ln\left(\tilde{\alpha}_{j,{b}}\right)+m_b\rb)\frac{1-Q\lb(\bar{\mathcal{B}}_{n^*}^{(n)}\rb)}{\mathcal{A}_{n^*}^{(n)}}\Bigg.\nn\\
&\Bigg.+\frac{s_b\exp{\lb(-\frac{(\bar{\mathcal{B}}_{n^*}^{(n)})^2}{2}\rb)}}{(\mathcal{A}_{n^*}^{(n)})^2\sqrt{2\pi}}+\frac{s_b\bar{\mathcal{B}}_{n^*}^{(n)}\lb(1-Q\lb(\bar{\mathcal{B}}_{n^*}^{(n)}\rb)\rb)}{(\mathcal{A}_{n^*}^{(n)})^2}\Bigg].
\end{align}

Finally, $I_{j,k,n^*}^{(+)}$ and  $I_{j,k,n^*}^{(-)}$ from (\ref{eq_final_I1jkn_APNDX}) and  (\ref{eq_final_I2jkn_APNDX}) are expressed in (\ref{eq_final_I1jkn}) and (\ref{eq_final_I2jkn}), respectively, in Proposition \ref{proposition}.

\subsection{Solution of $I_{k,j,e}$ given by (\ref{eq_I2_psi}).}
\label{appendix_sol_I2}

As in Appendix \ref{appendix_sol_I1}, we will find the approximate closed-form solution of $I_{k,j,e}$ in (\ref{eq_I2_psi}) with the help of the approximate $Q$-function. First, we simplify the equation as 
\begin{align}
\label{eq_final_I_kje}
I_{k,j,e}&=I_{k,j,e}^{(0)}+I^{(1)}_{{k,j,e}},
\end{align}
where
\begin{align}
\label{eq_Ikjezero_APNDX}
I_{k,j,e}^{(0)}&=\frac{1}{\ln{2}}\int_{-\infty}^{\infty}\lb(\ln\left(\tilde{\alpha}_{k,e} \right)+m_e+ s_e t\rb) f_Q(t)\mbox{d}t
\nn\\
&
=\frac{1}{\ln{2}}\lb(\ln\left(\tilde{\alpha}_{k,e} \right)+m_e\rb),
\end{align}
and 
\begin{align}
\label{eq_I2_kje_app_B}
I^{(1)}_{{k,j,e}}&=\frac{1}{\ln{2}}\int_{-\infty}^{\infty}\left(\ln\left(\tilde{\alpha}_{k,e} \right)+m_e+ s_e t\right) 
\nn\\
&
\times Q^N\left(-\left(\phi_{e}t+\lambda_{k,j,e}\right)\right)f_Q(t)\mbox{d}t.
\end{align}
After the change of variable $u=\phi_{e}t+\lambda_{k,j,e}$, and upon some further manipulations, we obtain
\begin{align}
\label{eq_new_Ikje}
I^{(1)}_{{k,j,e}}&=\frac{1}{\ln{2}}\int_{-\infty}^{\infty}\lb(\ln\left(\tilde{\alpha}_{k,e} \right)+m_e-\frac{s_e\lambda_{k,j,e}}{\phi_{e}}+ \frac{s_eu}{\phi_{e}} \rb)
\nn\\
&\times\frac{1}{\phi_{e}}Q^N\left(-u\right)\frac{1}{\sqrt{2\pi}}\exp\lb(-\frac{\lb(u-\lambda_{k,j,e}\rb)^2}{2\phi_{e}^2}\rb)\mbox{d}u.
\end{align}
Next, following Appendix \ref{appendix_sol_I1}, to apply the approximate $Q$-function in (\ref{eq_new_Ikje}) we divide the integration region appropriately as
\begin{align}
I^{(1)}_{{k,j,e}}&=I^{(+)}_{{k,j,e}}+I^{(-)}_{{k,j,e}},
\end{align}
where
\begin{align} 
\label{eq_I21_app}
I^{(+)}_{{k,j,e}}&=\frac{1}{\ln{2}}\lb(\ln\left(\tilde{\alpha}_{k,e} \right)+m_e-\frac{s_e\lambda_{k,j,e}}{\phi_{e}}+\frac{s_eu}{\phi_{e}}  \rb)
\nn\\
&
\times\frac{1}{\phi_{e}}\int_{-\infty}^{0} \frac{Q^N\left(-u\right)}{\sqrt{2\pi}}\exp\lb(-\frac{\lb(u-\lambda_{k,j,e}\rb)^2}{2\phi_{e}^2}\rb)\mbox{d}u,\\
\label{eq_I22_app}
I^{(-)}_{{k,j,e}}&=\frac{1}{\ln{2}}\lb(\ln\left(\tilde{\alpha}_{k,e} \right)+m_e-\frac{s_e\lambda_{k,j,e}}{\phi_{e}} +\frac{s_eu}{\phi_{e}}  \rb)
\nn\\
&
\times\frac{1}{\phi_{e}}\int_{0}^{\infty} \frac{Q^N\left(-u\right)}{\sqrt{2\pi}}\exp\lb(-\frac{\lb(u-\lambda_{k,j,e}\rb)^2}{2\phi_{e}^2}\rb)\mbox{d}u.
\end{align}

After using the approximate $Q$-function expression in (\ref{eq_I21_app}) and performing further manipulations, we make use of (\ref{eq_app_I_post}) and (\ref{eq_app_I_post_t}) in Appendix \ref{lemmas} to find the approximate closed-form solution of $I^{(+)}_{{k,j,e}}$ as 
\begin{align} 
\label{eq_Ikje_plus_APNDX}
&I^{(+)}_{{k,j,e}}=\frac{1}{\ln{2}}\lb(\ln\left(\tilde{\alpha}_{k,e} \right)+m_e-\frac{s_e\lambda_{k,j,e}}{\phi_{e}} +\frac{s_eu}{\phi_{e}} \rb)
\nn\\
&
\times\frac{1}{\phi_{e}}\int_{-\infty}^{0} \frac{\mathcal{D}_{k,j,e}^{(N)}}{\sqrt{2\pi}}\exp{\lb(-\frac{1}{2}\lb(\mathcal{A}_{e}^{(N)}  u-\mathcal{B}_{k,j,e}^{(N)} \rb)^2\rb)}\mbox{d}u\nn\\
&=\frac{\mathcal{D}_{k,j,e}^{(N)}}{\phi_{e}\ln{2}}\Bigg[\lb(\ln\left(\tilde{\alpha}_{k,e} \right)+m_e-\frac{s_e\lambda_{k,j,e}}{\phi_{e}}  \rb)\frac{Q\lb(\mathcal{B}_{k,j,e}^{(N)} \rb)}{\mathcal{A}_{e}^{(N)}}-
\Bigg.\nn\\
&\Bigg.
\frac{s_e}{\phi_{e}(\mathcal{A}_{e}^{(N)})^2\sqrt{2\pi}}\exp{\lb(-\frac{(\mathcal{B}_{k,j,e}^{(N)})^2}{2}\rb)}
+\frac{s_e\mathcal{B}_{k,j,e}^{(N)}Q\lb(\mathcal{B}_{k,j,e}^{(N)}\rb)}{\phi_{e}(\mathcal{A}_{e}^{(N)})^2}\Bigg],
\end{align}
where
 \begin{align}   
\label{eq_constants_kje_plus_APNDX}
\mathcal{A}_{e}^{(N)}&=\sqrt{2NK_1+\frac{1}{\phi_{e}^2}},~~~
\mathcal{B}_{k,j,e}^{(N)}=\frac{NK_2+\frac{\lambda_{k,j,e}}{\phi_{e}^2}}{\mathcal{A}_{e}^{(N)}},~~~\nn\\
\mathcal{C}_{k,j,e}^{(N)}&=2NK_3+\frac{\lambda_{k,j,e}^2}{\phi_{e}^2},\nn\\
\mathcal{D}_{k,j,e}^{(N)}&=\exp{\lb(-\frac{1}{2}\lb(\mathcal{C}_{k,j,e}^{(N)}-\lb(\mathcal{B}_{k,j,e}^{(N)}\rb)^2\rb)\rb)}.
\end{align}

Similarly, after replacing the approximate $Q$-function in (\ref{eq_I22_app}) and using (\ref{eq_app_I_neg}) and (\ref{eq_app_I_neg_t}) in Appendix \ref{lemmas}, we find the approximate closed-form solution of $I^{(-)}_{{k,j,e}}$ as 
\begin{align}  
\label{eq_Ikje_minus_APNDX}
&I^{(-)}_{{k,j,e}}=\frac{1}{\ln{2}}\lb(\ln\left(\tilde{\alpha}_{k,e} \right)+m_e-\frac{s_e\lambda_{k,j,e}}{\phi_{e}} +\frac{s_eu}{\phi_{e}}  \rb)
\nn\\
&\times\frac{1}{\phi_{e}} \sum_{n=0}^{N}\binom{N}{n}(-1)^n
\int_{0}^{\infty} \frac{\bar{\mathcal{D}}_{k,j,e}^{(n)}  }{\sqrt{2\pi}}
\nn\\
&
\times\exp{\lb(-\frac{1}{2}\lb(\mathcal{A}_{e}^{(n)}   u-\bar{\mathcal{B}}_{k,j,e}^{(n)}  \rb)^2\rb)}\mbox{d}u
\nn\\
&
=\frac{1}{\phi_{e}\ln{2}} \sum_{n=0}^{N}\binom{N}{n}(-1)^n\bar{\mathcal{D}}_{k,j,e}^{(n)}\nn\\
&\times\lb[\lb(\ln\left(\tilde{\alpha}_{k,e} \right)+m_e-\frac{s_e\lambda_{k,j,e}}{\phi_{e}}\rb)\frac{1-Q\lb(\bar{\mathcal{B}}_{k,j,e}^{(n)}\rb)}{\mathcal{A}_{e}^{(n)}  }\rb.\nn\\
&\lb.+\frac{s_e}{\phi_{e}}\frac{1}{(\mathcal{A}_{e}^{(n)}  )^2\sqrt{2\pi}}\exp{\lb(-\frac{(\bar{\mathcal{B}}_{k,j,e}^{(n)} )^2}{2}\rb)}
\rb.\nn\\
&
\lb.
+\frac{s_e}{\phi_{e}}\frac{\bar{\mathcal{B}}_{k,j,e}^{(n)} \lb(1-Q\lb(\bar{\mathcal{B}}_{k,j,e}^{(n)} \rb)\rb)}{(\mathcal{A}_{e}^{(n)}  )^2}\rb],
\end{align}
where
 
\begin{align}  
\label{eq_I22_constants_APNDX}
\mathcal{A}_{e}^{(n)}&=\sqrt{2nK_1+\frac{1}{\phi_{e}^2}},~~~
\bar{\mathcal{B}}_{k,j,e}^{(n)}=\frac{-nK_2+\frac{\lambda_{k,j,e}}{\phi_{e}^2}}{\mathcal{A}_{e}^{(n)}},
\nn\\
\mathcal{C}_{k,j,e}^{(n)}&=2nK_3+\frac{\lambda_{k,j,e}^2}{\phi_{e}^2},\nn\\
\bar{\mathcal{D}}_{k,j,e}^{(n)}&=\exp{\lb(-\frac{1}{2}\lb(\mathcal{C}_{k,j,e}^{(n)}-(\bar{\mathcal{B}}_{k,j,e}^{(n)})^2\rb)\rb)}.
\end{align}

Finally,  $I^{(0)}_{{k,j,e}}$, $I^{(+)}_{{k,j,e}}$, and $I^{(-)}_{{k,j,e}}$ from  (\ref{eq_Ikjezero_APNDX}), (\ref{eq_Ikje_plus_APNDX}), and (\ref{eq_Ikje_minus_APNDX}) are expressed in  (\ref{eq_Ikjezero}), (\ref{eq_Ikje_plus}), and (\ref{eq_Ikje_minus}), respectively, in Proposition \ref{proposition}.

\subsection{Solutions of the basic integrals.}
\label{lemmas}
In this section, we shall derive (in closed-form) the four basic integrals  which are encountered multiple times to achieve the approximate asymptotic ASC and approximate POI. Here we assume that $ \mathcal{A}$ and $ \mathcal{B}$ are arbitrary constants and $f_Q(t)$ is the standard normal PDF. In this case, the solution of the following integrals for any $\mathcal{A}$ and $\mathcal{B}$ are given by
\begin{align} 
\label{eq_app_I_post}
&\frac{1}{\sqrt{2\pi}}\int_{-\infty}^{0}\exp\lb(-\frac{1}{2}\lb(\mathcal{A}t-\mathcal{B}\rb)^2\rb)\mbox{d}t=\frac{Q\lb(\mathcal{B}\rb)}{\mathcal{A}},\\
\label{eq_app_I_post_t}
&\frac{1}{\sqrt{2\pi}}\int_{-\infty}^{0}t\exp\lb(-\frac{1}{2}\lb(\mathcal{A}t-\mathcal{B}\rb)^2\rb)\mbox{d}t\nn\\
&=-\frac{1}{\mathcal{A}^2\sqrt{2\pi}}\exp{\lb(-\frac{\mathcal{B}^2}{2}\rb)}+\frac{\mathcal{B}Q\lb(\mathcal{B}\rb)}{\mathcal{A}^2},\\
\label{eq_app_I_neg}
&\frac{1}{\sqrt{2\pi}}\int_{0}^{\infty}\exp\lb(-\frac{1}{2}\lb(\mathcal{A}t-\mathcal{B}\rb)^2\rb)\mbox{d}t=\frac{1-Q\lb(\mathcal{B}\rb)}{\mathcal{A}},\\
\label{eq_app_I_neg_t}
&\frac{1}{\sqrt{2\pi}}\int_{0}^{\infty}t\exp\lb(-\frac{1}{2}\lb(\mathcal{A}t-\mathcal{B}\rb)^2\rb)\mbox{d}t
\nn\\
&=\frac{1}{\mathcal{A}^2\sqrt{2\pi}}\exp{\lb(-\frac{\mathcal{B}^2}{2}\rb)}+\frac{\mathcal{B}\lb(1-Q\lb(\mathcal{B}\rb)\rb)}{\mathcal{A}^2}.
\end{align}
The proof easily follows by converting the exponential form in all the integrals into standard normal form by changing the integration variable and the corresponding limits.

}

\end{document}